\newtheorem{theorem}{Theorem}[section]
\newtheorem{proposition}[theorem]{Proposition}
\newcommand{\Imp}{\; \Rightarrow \;}
\newcommand{\IFF}{\;\; \Longleftrightarrow \;\;}
\newcommand{\id}{\mathsf{id}}
\newcommand{\Set}{\mathbf{Set}}
\newcommand{\sSet}{\mathbf{sSet}}
\newcommand{\Chu}{\mathbf{Chu}}
\newcommand{\Chus}{\mathbf{sChu}}
\newcommand{\rarr}{\rightarrow}
\newcommand{\lrarr}{\longrightarrow}
\newcommand{\pow}{\mathcal{P}}
\newcommand{\CC}{\EuScript{C}}
\newcommand{\HH}{\mathcal{H}}
\newcommand{\KK}{\mathcal{K}}
\newcommand{\LL}{\mathsf{L}}
\newcommand{\PP}{\mathsf{P}}
\newcommand{\UU}{\mathcal{U}}
\newcommand{\Ho}{\HH_{\circ}}
\newcommand{\Uo}{U_{\circ}}
\newcommand{\Ko}{\KK_{\circ}}
\newcommand{\ip}[2]{\langle #1 \mid #2 \rangle}
\newcommand{\norm}[1]{\| #1  \|}
\newcommand{\ray}[1]{\bar{#1}}
\newcommand{\eH}{e_{\HH}}
\newcommand{\eK}{e_{\KK}}
\newcommand{\beH}{\bar{e}_{\HH}}
\newcommand{\beK}{\bar{e}_{\KK}}
\newcommand{\Complex}{\mathbb{C}}
\newcommand{\GG}{\mathbb{G}}
\newcommand{\ie}{\textit{i.e.}~}
\newcommand{\PSymm}{\mathbf{\PP SymmH}}
\newcommand{\Two}{\mathbf{2}}
\newcommand{\NChu}{\mathbf{NChu}}
\newcommand{\FCo}{F_K{-}\mathbf{Coalg}}
\newcommand{\Pfn}{\mathbf{Pfn}}
\newcommand{\One}{\{ 0 \}}
\newcommand{\gQ}{\gamma_Q}
\newcommand{\aH}{a_{\HH}}
\newcommand{\aK}{a_{\KK}}
\newcommand{\baH}{\bar{a}_{\HH}}
\newcommand{\baK}{\bar{a}_{\KK}}
\newcommand{\FCoalg}{F{-}\mathbf{Coalg}}
\newcommand{\Prob}{\mathsf{Prob}}
\newcommand{\CAT}{\mathbf{CAT}}
\newcommand{\CHU}{\mathsf{Chu}}
\newcommand{\schu}{\mathsf{sChu}}
\newcommand{\II}{\mathbb{I}}
\newcommand{\Nat}{\mathbb{N}}
\newcommand{\IndF}{\mathsf{F}}
\newcommand{\sind}{\mathsf{sF}}
\newcommand{\IndS}{\mathsf{S}}
\newcommand{\TT}{\mathsf{T}}
\newcommand{\EE}{\mathsf{E}}
\newcommand{\Coalg}{\mathbf{Coalg}}
\newcommand{\fsem}[1]{\llbracket #1 \rrbracket}
\newcommand{\bis}{\sim_{b}}
\newcommand{\proj}{\sim_{p}}
\newcommand{\Ug}{(\UU_{Q}, \gQ)}
\begin{document}

\title{Coalgebras, Chu Spaces,  and
Representations of Physical Systems}
\author{Samson Abramsky\\
Oxford University Computing Laboratory}

\maketitle

\begin{abstract}
We revisit our earlier work on the representation of quantum systems as Chu spaces, and investigate the use of coalgebra as an alternative framework. On the one hand, coalgebras allow the dynamics of repeated measurement to be captured, and provide mathematical tools such as final coalgebras, bisimulation and coalgebraic logic. However, the standard coalgebraic framework does not accommodate contravariance, and is too rigid to allow physical symmetries to be represented.
We introduce a fibrational structure on coalgebras in which contravariance is represented by indexing. We use this structure to give a universal semantics for quantum systems based on a final coalgebra construction. We characterize equality in this semantics as projective equivalence. 
We also define an analogous indexed structure for Chu spaces, and use this to obtain a novel categorical description of the category of Chu spaces. We use the indexed structures of Chu spaces and coalgebras over a common base to define a truncation functor from coalgebras to Chu spaces.
This truncation functor is used to lift the full and faithful representation of the groupoid of physical symmetries on Hilbert spaces into Chu spaces, obtained in our previous work, to the coalgebraic semantics.
\end{abstract}


\section{Introduction}

Chu spaces and universal coalgebra are two general formalisms for systems modelling in a broad sense. Both have been studied quite extensively in Computer Science over the past couple of decades. Recently, we showed how quantum systems with their symmetries have a full and faithful representation as Chu spaces \cite{btm}. We had in fact originally intended to use coalgebras as the vehicle for this work. This did not prove satisfactory, for reasons which will be explained later. But coalgebras have many features which make them promising for studies of this kind. Moreover, as we shall show, the problems which arise can in fact be overcome to a considerable degree, in a fashion which brings to light some interesting and novel aspects of these two well-studied models, and in particular of \emph{the relationships between them} --- which have, to the best of our knowledge, not been studied at all previously. 

The purpose of the present paper is thus to develop some systematic connections and contrasts between Chu spaces and coalgebras, the modelling issues which arise, what can be done to resolve them, and which problems remain outstanding.

The main results of our investigations can be summarized as follows:
\begin{itemize}
\item Firstly, at the general level, we look at the comparative strengths and weaknesses of the two formalisms. On our analysis, the key feature that Chu spaces have and coalgebras lack is \emph{contravariance}; the key feature which coalgebras have and Chu spaces lack is \emph{extension in time}. There are some interesting secondary issues as well, notably \emph{symmetry vs.~rigidity}.

\item Formally, we introduce an indexed structure for coalgebras to compensate for the lack of contravariance, and show how this can be used to represent a wide class of physical systems in coalgebraic terms. In particular, we show how a \emph{universal model for quantum systems} can be constructed as a \emph{final coalgebra}. This opens the way to the use of methods such as \emph{coalgebraic logic} in the study of physical systems. It also suggests how coalgebra can mediate between \emph{ontic} and \emph{epistemic} views of the states of physical systems.

\item We also define an analogous indexed structure for Chu spaces, and use this to obtain a novel categorical description of the category of Chu spaces. We use the indexed structures of Chu spaces and coalgebras over a common base to define a \emph{truncation functor} from coalgebras to Chu spaces.

\item We use this truncation functor to lift the full and faithful representation of the groupoid of physical symmetries on Hilbert spaces into Chu spaces, obtained in \cite{btm}, to the coalgebraic semantics.
\end{itemize}

The further contents of the paper are organized as follows. In Section~2 we review some background on Chu spaces and coalgebras. In section~3 we make a first comparison of Chu spaces and coalgebras. Then in Section~4 we discuss the modelling issues, the problems which arise, and the strengths and weaknesses of the two approaches. In Section~5 we develop the technical material on indexed structure for coalgebras. A similar development for  Chu spaces is carried out in Section~6, and the  truncation functor is defined. In Section~7 we show how a universal model for quantum systems can be constructed as a final coalgebra; equality in the coalgebraic semantics is characterized as projective equivalence, and the representation theorem for the symmetry groupoid on Hilbert spaces is lifted from Chu spaces to the coalgebraic category. Section~8 outlines the general scheme of `bivariant coalgebra' underlying our approach. 

\section{Background}

\subsection{Coalgebra}
Coalgebra has proved to be a powerful and flexible tool for modelling a wide range of systems.
We shall give a very brief introduction. Further details may be found e.g. in the excellent presentation in \cite{DBLP:journals/tcs/Rutten00}.

Category theory allows us to \emph{dualize}  algebras to obtain a notion of \emph{coalgebras of an endofunctor}. However, while algebras abstract a familiar set of notions, coalgebras open up a new and rather unexpected territory, and provides an effective abstraction and mathematical theory for a central class of computational phenomena:
\begin{itemize}
\item Programming over \emph{infinite data structures}: streams, infinite trees, etc.

\item A novel notion of \emph{coinduction}.

\item Modelling \emph{state-based computations} of all kinds.

\item The key notion of \emph{bisimulation equivalence} between processes.

\item A general \emph{coalgebraic logic} can be read off from the functor, and used to specify and reason about properties of systems.
\end{itemize}

Let $F : \CC \rarr \CC$ be a functor.
An \emph{$F$-coalgebra} is a pair $(A, \alpha)$ where $A$ is an object of $\CC$, and $\alpha$ is an  arrow $\alpha : A \rarr FA$. We say that $A$ is the \emph{carrier} of the coalgebra, while $\alpha$ is the \emph{behaviour map}.

An \emph{$F$-coalgebra homomorphism} from $(A, \alpha)$ to $(B, \beta)$ is an arrow $h : A \rarr B$ such that
\begin{diagram}
A & \rTo^{\alpha} & FA \\
\dTo^{h} & & \dTo_{Fh} \\
B & \rTo_{\beta} & FB
\end{diagram}

$F$-coalgebras and their homomorphisms form a category $\FCoalg$.

An $F$-coalgebra $(C, \gamma)$ is \emph{final} if for every $F$-coalgebra $(A, \alpha)$ there is a unique homomorphism from $(A, \alpha)$ to $(C, \gamma)$, \ie if it is the terminal object in $\FCoalg$.

\begin{proposition}
If a final $F$-coalgebra exists, it is unique up to isomorphism.
\end{proposition}

\begin{proposition}[Lambek Lemma]
If $\gamma : C \rarr FC$ is final, it is an isomorphism
\end{proposition}


\subsection{Chu Spaces}

Chu spaces are a special case of a construction which originally appeared in \cite{Chu79},  written by Po-Hsiang Chu as an appendix to Michael Barr's monograph on $*$-autonomous categories \cite{Barr79}.

Chu spaces have several interesting aspects:
\begin{itemize}
\item They have a rich type structure, and in particular form models of Linear Logic \cite{Gir87,See89}.
\item They have a rich representation theory; many concrete categories of interest  can be fully embedded into Chu spaces \cite{LS91,DBLP:conf/lics/Pratt95}.
\item There is a natural notion of `local logic' on Chu spaces \cite{Barwise97}, and an interesting characterization of information transfer across Chu morphisms \cite{DBLP:journals/igpl/Benthem00}.
\end{itemize}
Applications of Chu spaces have been proposed in a number of areas, including concurrency \cite{DBLP:journals/mscs/Pratt03},
hardware verification \cite{DBLP:conf/cdes/Ivanov08}, game theory \cite{RePEc:usi:wpaper:417}  and fuzzy systems \cite{Pap00,DBLP:journals/jaciii/NguyenNWK01}. Mathematical studies concerning the general Chu construction include \cite{DBLP:journals/mscs/Pavlovic97,Barr98,DBLP:journals/acs/GiuliT07}.

We briefly review the basic definitions.

Fix a set $K$. A Chu space over $K$ is a structure $(X, A, e)$, where $X$ is a set of `points' or `objects', $A$ is a set of `attributes', and $e : X \times A \rarr K$ is an evaluation function.

A morphism of Chu spaces
\[ f : (X, A, e) \rarr (X', A', e') \]
is a pair of functions
\[ f = (f_{*} : X \rarr X', f^{*} : A' \rarr A) \]
such that, for all $x \in X$ and $a' \in A'$:
\[ e(x, f^{*}(a')) = e'(f_{*}(x), a') . \]
Chu morphisms compose componentwise: if $f : (X_{1}, A_{1}, e_{1}) \rarr (X_{2}, A_{2}, e_{2})$ and $g : (X_{2}, A_{2}, e_{2}) \rarr (X_{3}, A_{3}, e_{3})$, then
\[ (g \circ f)_{*} = g_{*} \circ f_{*}, \qquad (g \circ f)^{*} = f^{*} \circ g^{*} . \]
Chu spaces over $K$ and their morphisms form a category $\Chu_{K}$.

\subsection{Representing Physical Systems}

Our basic paradigm for representing physical systems, as laid out in \cite{btm}, is as follows.
We take a system to be specified by its set of \emph{states} $S$, and the set of \emph{questions} $Q$ which can be `asked' of the system. We shall consider only `yes/no' questions; however, the result of asking a question in a given state will in general be \emph{probabilistic}. This will be represented by an evaluation function
\[ e : S \times Q \rarr [0, 1] \]
where $e(s, q)$ is the probability that the question $q$ will receive the answer `yes' when the system is in state $s$. 
Thus a system is represented directly as a Chu space.

In particular, a quantum system with a Hilbert space $\HH$ as its state space will be represented as
\[ (\Ho, \LL(\HH), e_{\HH}) \]
where $\Ho$ is the set of non-zero vectors of $\HH$, $\LL(\HH)$ is the set of closed subspaces of $\HH$, and the evaluation function $\eH$ is the basic `statistical algorithm' of Quantum Mechanics:
\[ \eH(\psi, S) = \frac{\ip{\psi}{P_S \psi}}{\ip{\psi}{\psi}} = \frac{\ip{P_S \psi}{P_S \psi}}{\ip{\psi}{\psi}}  = \frac{\norm{P_S \psi}^2}{\norm{\psi}^2} .  \]
For a more detailed discussion see \cite{btm}. That paper goes on to show that:
\begin{itemize}
\item The biextensional collapse of this Chu space yields the usual projective representation of states as rays.
\item The Chu morphisms between these spaces are exactly the unitaries and unitaries, yielding a full and faithful functor from the groupoid of physical symmetries on Hilbert spaces to Chu spaces.
\item This representation is preserved by collapsing the unit interval to three values, but \emph{not} by the further collapse by either of the standard `possibilistic' reductions to two values.
\end{itemize}
This yields quite a pleasant picture. We would now like to investigate to what extent we can use coalgebras as an alternative setting for such representations; what problems arise, and on the other hand, what new possibilities become available.

\section{Comparison: A First Attempt}

We shall begin by showing that a subcategory of Chu spaces can be captured in completely equivalent form as a category of coalgebras.

Fix a set $K$. We can define a functor on $\Set$:
\[ F_K : X \mapsto K^{\pow X} . \]
If we use the \emph{contravariant} powerset functor, $F_K$ will be covariant. Explicitly, for $f : X \rarr Y$:
\[ F_Kf(g)(S) = g(f^{-1}(S)) , \]
where $g \in K^{\pow X}$ and $S \in \pow Y$.
A coalgebra for this functor will be a map of the form
\[ \alpha : X \rarr K^{\pow X} . \]
Consider a Chu space $C = (X, A, e)$ over $K$. We suppose furthermore that this Chu space is \emph{normal} (cf.~\cite{DBLP:conf/calco/PalmigianoV07} for a related but not identical use of this term), meaning that $A = \pow X$. 
Given this normal Chu space, we can define an $F_K$-coalgebra on $X$ by
\[ \alpha(x)(S) = e(x, S) . \]
We write $GC = (X, \alpha)$.

A coalgebra homomorphism from $(X, \alpha)$ to $(Y, \beta)$ is a function $h : X \rarr Y$ such that

\begin{diagram}
X & \rTo^{\alpha} & K^{\pow X} \\
\dTo<{h} & & \dTo>{Fh} \\
Y & \rTo_{\beta} & K^{\pow Y} \\
\end{diagram}

\begin{proposition}
\label{coalgchuprop}
Suppose we are given a Chu morphism $f : C \rarr C'$, where $C$ and $C'$ are normal Chu spaces, such that $f^* = f_*^{-1}$.
Then $f_* : GC \rarr GC'$ is an $F_{K}$-algebra homomorphism.
Conversely, given any $F_K$-algebra homomorphism $f : GC \rarr GC'$, then $(f, f^{-1}) : C \rarr C'$ is a Chu morphism.
\end{proposition}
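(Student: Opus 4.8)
The plan is to observe that, once both are spelled out as pointwise identities valued in $K$, the Chu-morphism equation and the coalgebra commuting square are literally the \emph{same} equation; the whole proof then reduces to unwinding the action of $F_K$ on arrows and matching up the two bookkeeping conventions. Throughout I read $f_*^{-1}$ and $f^{-1}$ as the \emph{inverse-image} (preimage) map $\pow X' \rarr \pow X$, which is defined for an arbitrary function and has exactly the type required of the backward component $f^*$ of a Chu morphism between normal spaces.

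For the forward direction, let $f = (f_*, f^*) : C \rarr C'$ be a Chu morphism with $f^* = f_*^{-1}$. Writing $GC = (X, \alpha)$ and $GC' = (X', \alpha')$, I would evaluate the coalgebra square $F_K f_* \circ \alpha = \alpha' \circ f_*$ at a point $x \in X$ and an attribute $S \in \pow X'$. Using the definition $F_K f_*(g)(S) = g(f_*^{-1}(S))$ together with $\alpha(x)(T) = e(x, T)$, the left-hand side becomes $e(x, f_*^{-1}(S))$, while the right-hand side is $\alpha'(f_*(x))(S) = e'(f_*(x), S)$. Hence the coalgebra identity holds for all $x$ and $S$ precisely when
\[ e(x, f_*^{-1}(S)) = e'(f_*(x), S) , \]
and this is exactly the Chu-morphism condition $e(x, f^*(S)) = e'(f_*(x), S)$ under the hypothesis $f^* = f_*^{-1}$. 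So $f_*$ is a coalgebra homomorphism.

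The converse is the same computation read backwards. Given a coalgebra homomorphism $f : GC \rarr GC'$, I would set $f_* = f$ and $f^* = f^{-1}$ (inverse image). Evaluating the homomorphism square pointwise as above yields $e(x, f^{-1}(S)) = e'(f(x), S)$ for all $x \in X$ and $S \in \pow X'$, which is verbatim the defining condition for $(f, f^{-1})$ to be a Chu morphism $C \rarr C'$.

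The verification itself is routine; the only point that needs care --- and the place where the indexing/contravariance theme of the paper is already visible --- is the variance bookkeeping. One must check that using the \emph{contravariant} powerset makes $F_K$ covariant, so that $F_K f_*$ acts by \emph{pre}composition with the inverse image $f_*^{-1}$, and that this inverse-image map indeed lands in the correct hom-set $\pow X' \rarr \pow X$ to serve as $f^*$. Once the types are aligned in this way there is no residual obstacle: the two conditions coincide on the nose, so $G$ transports Chu morphisms whose backward part is an inverse image to coalgebra homomorphisms, and conversely.
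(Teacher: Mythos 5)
Your proof is correct and follows essentially the same route as the paper's: both unwind the coalgebra square $F_K f_* \circ \alpha = \beta \circ f_*$ pointwise at $(x, S)$, use the fact that $F_K f_*$ acts by precomposition with the preimage map $f_*^{-1}$, and observe that the resulting identity $e(x, f_*^{-1}(S)) = e'(f_*(x), S)$ is exactly the Chu-morphism condition, with the converse obtained by reading the same chain of equalities in the other direction. Your reading of $f^{-1}$ as the inverse-image map (so that no bijectivity of $f_*$ is needed) agrees with the paper's usage.
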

\begin{proof}
Let $(f_*, f_*^{-1}) : C \rarr C'$ be a Chu space morphism.
Then
\[ \begin{array}{rcl}
(Ff_* \circ \alpha)(x)(S) & = & Ff_*(\alpha(x))(S) \\
& = & \alpha(x)(f_*^{-1} S) \\
& = & e(x, f_*^{-1}S) \\
& = & e(x, f^{*}S) \\
& = & e'(f_*(x), S) \\
& = & \beta \circ f_*(x)(S)
\end{array}
\]
so $f_*$ is a $F_K$-coalgebra homomorphism.
The converse is verified similarly (in fact by a cyclic permutation of the steps of the above proof).
\end{proof}
Let $\NChu_K$ be the category of normal Chu spaces and Chu morphisms of the form $(f, f^{-1})$.
Then by the Proposition, $G$ extends to a functor $G : \NChu_K \rarr \FCo$, with $G(f, f^{-1}) = f$.
Conversely, given an $F$-coalgebra $(X, \alpha)$, we can define a normal Chu space $H (X, \alpha) = (X, \pow X, e)$, where
$e(x, S) = \alpha(x)(S)$,
and given a coalgebra homomorphism $f : (X, \alpha) \rarr (Y, \beta)$, 
\[ H f = (f, f^{-1}) : H (X, \alpha) \rarr (Y, \beta) \]
will be a Chu morphism; this is verified in entirely similar fashion to Proposition~\ref{coalgchuprop}.

Altogether, we have shown:
\begin{theorem}
\label{isothm}
$\NChu_K$ and $\FCo$ are isomorphic categories, with the isomorphism witnessed by $G$ and $H = G^{-1}$.
\end{theorem}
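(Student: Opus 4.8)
The plan is to note that the substantive content has already been discharged by Proposition~\ref{coalgchuprop} and the remark following it, which together show that $G$ and $H$ are well defined on morphisms: the underlying function $f$ of a Chu morphism $(f, f^{-1})$ in $\NChu_K$ is an $F_K$-coalgebra homomorphism, and conversely every coalgebra homomorphism $f : GC \rarr GC'$ gives a Chu morphism $(f, f^{-1}) : C \rarr C'$. It therefore remains only to verify that $G$ and $H$ are functors and that they are mutually inverse, both on objects and on morphisms.

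On objects the check is immediate. For a normal Chu space $C = (X, \pow X, e)$ we have $GC = (X, \alpha)$ with $\alpha(x)(S) = e(x, S)$, and hence $H(GC) = (X, \pow X, e')$ with $e'(x, S) = \alpha(x)(S) = e(x, S)$, so $e' = e$ and $H(GC) = C$; the reverse composite $G(H(X, \alpha)) = (X, \alpha)$ is identical. Thus $H \circ G$ and $G \circ H$ act as the identity on objects.

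The one conceptual point sits at the level of morphisms, and it is worth isolating. A morphism of $\NChu_K$ carries no information beyond its first component, since the second component is forced to be the preimage map $f^{-1}$; hence $(f, f^{-1}) \mapsto f$ is an injection of $\mathrm{Hom}_{\NChu_K}(C, C')$ into the set of functions $X \rarr X'$. Proposition~\ref{coalgchuprop} then identifies the image of this injection with precisely the set of coalgebra homomorphisms $GC \rarr GC'$: its first half shows each such $f$ is a homomorphism, and its converse half shows every homomorphism so arises. Consequently $G$ and $H$ restrict to mutually inverse bijections on each hom-set. Functoriality is routine and imposes no obstacle: identities are preserved because $\id_X$ has preimage map $\id_{\pow X}$, and composition is preserved because $(g \circ f)^{-1} = f^{-1} \circ g^{-1}$ as preimage operators, so both functors commute with composition. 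I anticipate no real difficulty; the only care needed is the observation that a $\NChu_K$ morphism is determined by its first component, which upgrades the surjectivity supplied by Proposition~\ref{coalgchuprop} to a genuine bijection on morphisms.
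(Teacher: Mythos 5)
Your proof is correct and takes essentially the same route as the paper, which likewise assembles Theorem~\ref{isothm} directly from Proposition~\ref{coalgchuprop} and the surrounding remarks (the definitions of $G$ and $H$ on objects and morphisms), treating the object-level inverses and functoriality as immediate. Your explicit observation that an $\NChu_K$-morphism is determined by its first component, so that Proposition~\ref{coalgchuprop} yields mutually inverse bijections on hom-sets, is exactly the point the paper leaves implicit.
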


\subsection{Discussion}

\subsubsection{A Critique of Coalgebras}
\label{critcosubsec}

\paragraph{Normality} Of course, the assumption of normality for Chu spaces is very strong; although it is worth mentioning that we have assumed nothing about either the value set or the evaluation function, in contrast to the notion of normality used in \cite{DBLP:conf/calco/PalmigianoV07} (for quite different purposes), which allows the attributes to be any subset of the powerset, but stipulates that $K = \Two$ and that the evaluation function is the characteristic function for set membership. One would like to extend the above correspondence to allow for wider classes of Chu spaces, in which the attributes need not be the full powerset. This is probably best done in an enriched setting of some kind.

It should also be said that the use of powersets, full or not,  to represent `questions' is fairly crude and ad hoc. The degree of freedom afforded by Chu spaces to choose \emph{both} the states \emph{and} the questions appropriately is a major benefit to conceptually natural and formally adequate modelling of a wide range of situations.

\paragraph{The Type Functor}
The experienced coalgebraist will be aware that the functors $F_K$ are problematic from the point of view of coalgebra. In particular, they fail to preserve weak pullbacks, and hence $\FCo$ will lack some of the nice structural properties one would like a category of coalgebras to possess. In fact, $F_K$ is a close cousin of the `double contravariant powerset', which is a standard counter-example for these properties \cite{DBLP:journals/tcs/Rutten00}.
However, much coalgebra can be done without this property \cite{GS05}, and recent work has achieved interesting results for coalgebras over the double contravariant powerset \cite{HKP07}.

A secondary problem is that as it stands, $\FCo$ cannot have a final coalgebra, for mere cardinality reasons. In fact, this issue can be addressed in a standard way. We can replace the contravariant powerset by a bounded version $\pow_{\kappa}$. We can also replace the function space by the \emph{partial function space} $\Pfn(X, Y)$. Thinking of partial functions in terms of their graphs, there is a set inclusion $\Pfn(X,Y) \subseteq \pow (X \times Y)$. Hence we can use a bounded version of the partial function functor, say $\Pfn_{\lambda}(X, Y)$, yielding those partial functions whose graphs have cardinality $< \lambda$. The resulting modified version of $F_K$:
\[ X \mapsto \Pfn_{\lambda}(\pow_{\kappa}(X), K) \]
is bounded, and admits a final coalgebra. Moreover, by choosing $\kappa$ and $\lambda$ sufficiently large, we can still represent a large class of systems whose behaviour involves total functions.

\paragraph{Behaviours vs. Symmetries}
However, there is a deeper conceptual problem which militates against the use of coalgebras in our context. An important property of physical theories is that they have rich symmetry groups (and groupoids), in which the key invariants are found, and from which the dynamics can be extracted. The main result of \cite{btm} was to recover these symmetries in the case of quantum systems as Chu morphisms.
The picture in coalgebra is rather different. One is concerned with behavioural or observational equivalence, as encapsulated by bisimulation, and the final coalgebra gives a `fully abstract' model of behaviour, in which bisimulation turns into equality. Moreover, every coalgebra morphism  is a functional bisimulation. If we consider the class of \emph{strongly extensional} coalgebras \cite{DBLP:journals/tcs/Rutten00}, those which have been quotiented out by bisimulation, they form a \emph{preorder}, and essentially correspond to the subcolagebras of the final coalgebra. Thus in a sense coalgebras are oriented towards maximum rigidity, and minimum symmetry. 

From this point of view, it would seem more desirable to have a \emph{universal homogeneous model},  with a maximum degree of symmetry, as a universal model for a large class of physical systems, rather than a final coalgebra.  Such a model has been constructed for bifinite Chu spaces in \cite{DBLP:conf/calco/DrosteZ07}. That context is too limited for our purposes here. It remains to be seen if universal homogeneous models can be constructed for larger subcategories of Chu spaces, encompassing those involved in our representation results.

In the present paper, we shall develop an alternative resolution of this problem by using  a fibred category of coalgebras, in which there is sufficient scope for variation to allow for the representation of symmetries. We shall use this to lift the representation theorem of \cite{btm} from Chu spaces to coalgebras.

\subsubsection{In Praise of Coalgebras}

\begin{itemize}
\item The coalgebraic point of view can be described as \emph{state-based}, but in a way that emphasizes that the meaning of states lies in their \emph{observable behaviour}. Indeed, in the ``universal model'' we shall construct, the states are determined exactly as the possible observable behaviours --- we actually find a canonical solution for \emph{what the state space should be} in these terms. States are identified exactly if they have the same observable behaviour.

We can see this as a kind of reconciliation between the \emph{ontic} and \emph{epistemic} standpoints, in which moreover \emph{operational ideas} are to the fore.

\item Coalgebras allow us to capture the `dynamics of measurement' --- what happens \emph{after a measurement} --- in a way that Chu spaces don't. They have \emph{extension in time} \cite{DBLP:conf/nato/AbramskyGN96}. We explain what we mean by this in more detail below.

\end{itemize}

\paragraph{Extension in Time}
Consider a coalgebraic representation of \emph{stochastic transducers}:
\[ F : X \mapsto \Prob (O \times X)^I \]
where $I$ is a fixed set of \emph{inputs}, $O$ a fixed set of \emph{outputs}, and $\Prob (S)$ is the set of probability distributions of finite support on $S$. This expresses the behaviour of a state $x \in X$ in terms of how it responds to an input $i \in I$ by producing an output $o \in O$ and evolving into a new state $x' \in X$. Since the automaton is stochastic, what is specified for each input $i$ is a probability distribution over the pairs $(o, x')$ comprising the possible responses.

We can think of $I$ as a set of \emph{questions}, and $O$ as a set of \emph{answers} (which we could standardize by only considering yes/no questions). Thus we can see such a stochastic automaton as a variant of the representation of physical systems we discussed previously, with the added feature of \emph{extension in time} --- the capacity to represent behaviour under repeated interactions.

What we can learn from this observation, incidentally, is that
\begin{center}
\fbox{QM is \emph{less} nondeterministic/probabilistic than stochastic transducers}
\end{center}
since in Quantum Mechanics, if we know the preparation and the outcome of the measurement, we know (by the projection postulate) exactly what the resulting quantum state will be. In automata theory, by contrast, even if we know the current state, the input, and which observable output was produced in response, we still do not  know in general what the next state will be. Could there be physical theories of this type?

\section{Semantics In One Country}

As a first step to developing a viable coalgebraic approach to representing physical systems, we shall hold a single system fixed, and see how we can represent this coalgebraically. This simple step eliminates most of the problems with coalgebras which we encountered in the previous Section. We will then have to see how variation of the system being represented can be reintroduced.

\subsection{Coalgebraic Semantics For One System}

We fix attention on a single Hilbert space $\HH$. This determines a set of questions $Q = \LL(\HH)$.
We now define an endofunctor on $\Set$:
\[ F^Q : X \mapsto (\One + (0, 1] \times X)^Q . \]
A coalgebra for this functor is then  a map
\[ \alpha : X \rarr (\One + (0, 1] \times X)^Q \]
The interpretation is that $X$ is a set of states; the coalgebra map sends a state to its behaviour, which is a function from questions in $Q$ to the probability that the answer is `yes'; and, \emph{if the probability is not 0}, to the successor state following a `yes' answer.

Unlike the functors $F_K$, the functors $F^Q$ are very well-behaved from the point of view of coalgebra (they are in fact \emph{polynomial functors} \cite{DBLP:journals/tcs/Rutten00}). They preserve weak pull-backs, which guarantees a number of nice properties, and they are bounded and admit final coalgebras
\[ \gQ : \UU_Q \rarr (\One + (0, 1] \times \UU_Q)^Q . \]
The elements of $\UU_Q$ can be visualized as `$Q$-branching trees',  with the arcs labelled by probabilities.

The $F^Q$-coalgebra which is of primary interest to us is 
\[ \aH : \Ho \rarr (\One + (0, 1] \times \Ho)^Q \]
defined by:
\[ \aH(\psi)(S) = \left\{ \begin{array}{ll}
 0, & \eH(\psi, S) = 0 \\
 (r, P_S \psi), & \eH(\psi, S) = r > 0.
 \end{array}
 \right.
 \]
The new ingredient compared with the Chu space representation of $\HH$ is the state which results in the case of a `yes' answer to the question, which is computed according to the (unnormalized) \emph{L\"uders rule}.
 
This system will of course have a representation in the final coalgebra $(\UU_Q, \gamma_Q)$, specified by the unique coalgebra homomorphism $h : (\Ho, \aH) \rarr (\UU_Q, \gamma_Q)$.

\section{Indexed Structure For Coalgebras}

Our strategy will now be to \emph{externalize contravariance as indexing}. This will allow us to alleviate many of the problems we  encountered with using coalgebras to represent physical systems, and to access the power of the coalgebraic framework. In particular, we will be able to construct a \emph{single universal model for quantum systems}.

We shall define a functor
\[ \IndF : \Set^{\mathsf{op}} \rarr \CAT \]
where $\CAT$ is the `superlarge'\footnote{For those concerned with set-theoretic foundations, we shall on a couple of occasions refer to `superlarge' categories such as $\CAT$, the category of `large categories'  such as $\Set$. If we think of large categories as based on classes, superlarge categories are based on entities `one size up' --- `conglomerates' in the terminology of \cite{HS73}. This can be formalized in set theory with a couple of Grothendieck universes.} category of categories and functors.
$\IndF$ is defined on objects by
\[ Q \mapsto F^Q{-}\Coalg. \]
For  a function $f : Q' \rarr Q$, we define
\[ t^f_X : F^Q(X) \rarr F^{Q'}(X) :: \Theta \mapsto \Theta \circ f \]
and 
\[ \IndF(f) = f^* : F^Q{-}\Coalg \rarr F^{Q'}{-}\Coalg \]
\[ f^* : (X, \alpha) \mapsto (X, t^f_X \circ \alpha), \qquad f^* : (h : (X, \alpha) \rarr (Y, \beta))  \mapsto h . \]

\begin{proposition}
\label{mtfprop}
For each $f : Q' \rarr Q$, $t^f$ is a natural transformation, and $f^*$ is a functor.
\end{proposition}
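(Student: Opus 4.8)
The plan is to treat the two assertions in order, since the functoriality of $f^*$ on morphisms will turn out to rest on the naturality of $t^f$. Throughout, the only ingredients are associativity of composition in $\Set$ together with the explicit descriptions of $F^Q$ on arrows and of $t^f$.

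First I would establish that $t^f : F^Q \Rightarrow F^{Q'}$ is natural. For a function $g : X \rarr Y$ I would record that $F^Q(g)$ is postcomposition with the map $\phi_g := \id_{\One} + (\id \times g) : \One + (0,1] \times X \rarr \One + (0,1] \times Y$, whereas $t^f_X$ is precomposition with $f$. An element $\Theta \in F^Q(X)$ is a function $Q \rarr \One + (0,1] \times X$, and chasing it around the naturality square gives $\phi_g \circ \Theta \circ f$ along both the path $F^{Q'}(g) \circ t^f_X$ and the path $t^f_Y \circ F^Q(g)$. Hence the square commutes purely by associativity, and $t^f$ is a natural transformation.

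Next I would verify that $f^*$ is a functor. On objects it is well-defined by typing: for a coalgebra $(X, \alpha)$ the composite $t^f_X \circ \alpha : X \rarr F^{Q'}(X)$ is an $F^{Q'}$-coalgebra. The one substantive point is that $f^*$ is well-defined on morphisms. Given a homomorphism $h : (X, \alpha) \rarr (Y, \beta)$, that is $F^Q(h) \circ \alpha = \beta \circ h$, I must show $F^{Q'}(h) \circ (t^f_X \circ \alpha) = (t^f_Y \circ \beta) \circ h$. Here I would invoke the naturality just proved to rewrite $F^{Q'}(h) \circ t^f_X = t^f_Y \circ F^Q(h)$; substituting this and then applying the hypothesis $F^Q(h) \circ \alpha = \beta \circ h$ collapses the left-hand side to $t^f_Y \circ \beta \circ h$, as required. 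Finally, since $f^*$ leaves the carrier and the underlying function of each homomorphism untouched, preservation of identities and of composites is immediate.

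The main obstacle, such as it is, is not computational but organizational: one must keep the contravariant direction straight, so that $f : Q' \rarr Q$ induces $t^f : F^Q \Rightarrow F^{Q'}$ and $f^* : F^Q{-}\Coalg \rarr F^{Q'}{-}\Coalg$ pointing the opposite way, and one must recognize that the morphism part of functoriality is not an independent calculation but is exactly the naturality square applied to $h$. Everything beyond that is an appeal to associativity.
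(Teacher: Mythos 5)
Your proof is correct and follows essentially the same route as the paper's: naturality of $t^f$ is established by observing that pre-composition with $f$ and post-composition with $1 + (1 \times g)$ commute by associativity, yielding the common value $(1 + (1 \times g)) \circ \Theta \circ f$, and the homomorphism condition for $f^*$ is obtained by pasting the naturality square onto the coalgebra homomorphism square. The only difference is presentational: the paper cites the general fact that a natural transformation induces a functor between coalgebra categories and exhibits the pasted diagram, whereas you carry out the same rewriting equationally and note explicitly that identities and composites are preserved since $f^*$ acts as the identity on underlying maps.
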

\begin{proof}
The naturality of $t^f$ is the diagram
\begin{diagram}
X && F^Q X & \rTo^{t^f_X} & F^{Q'} X \\
\dTo<{g} && \dTo<{F^Q g} & & \dTo>{F^{Q'} g} \\
Y && F^Q Y & \rTo_{t^f_Y} & F^{Q'} Y
\end{diagram}
This diagram commutes because $t^f$ acts by pre-composition and $F^Q$, $F^{Q'}$ by post-composition.
For any $\Theta \in F^Q X$, we obtain the common value
\[ (1 + (1 \times g)) \circ \Theta \circ f . \]

It is a general fact \cite{DBLP:journals/tcs/Rutten00} that a natural transformation $t : F \rarr G$ induces a functor between the coalgebra categories in the manner specified above. The fact that the coalgebra homomorphism condition is preserved follows from the commutativity of
\begin{diagram}
X & \rTo^{\alpha} & FX & \rTo^{t_X} & GX \\
\dTo<{h} & & \dTo>{Fh} & & \dTo>{Gh} \\
Y & \rTo_{\beta} & FY & \rTo_{t_Y} & GY
\end{diagram}
The left hand square commutes because $h$ is an $F$-coalgebra homomorphism; the right hand square is naturality of $t$.
\end{proof}

Thus we get a \emph{strict indexed category} of coalgebra categories, with contravariant indexing.

\subsection{The Grothendieck Construction}
We now recall an important general construction.
Where we have an indexed category, we can apply the \emph{Grothendieck construction} \cite{Groth70}, to glue all the fibres together (and get a fibration).

Given a functor
\[ \II : \CC^{\mathsf{op}} \rarr \CAT \]
we define $\int \II$ with objects $(A, a)$, where $A$ is an object of $\CC$ and $a$ is an object of $\II(A)$.
Arrows are $(G, g) : (A, a) \rarr (B, b)$, where $G : B \rarr A$ and $g: \II(G)(a) \rarr b$.

Composition of $(G, g) : (A, a) \rarr (B, b)$ and $(H, h) : (B, b) \rarr (C, c)$ is given by
\[ (G \circ H, h \circ \II(H)(g)) : (A, a) \rarr (C, c) . \]

Applying the Grothendieck construction to $\IndF$, we can now put all our categories of coalgebras, indexed by the sets of questions,  together in one category. We will use this to get our universal model for quantum systems.

Before turning to this, we will consider an analogous indexed structure for Chu spaces, which will allow us to define a comparison functor between the two models.

\section{Indexed Comparison With Chu Spaces}

\subsection{Slicing and Dicing Chu}

For each $Q$, we define $\Chu_K^Q$ to be the subcategory of $\Chu_K$ of Chu spaces $(X, Q, e)$ and morphisms of the form $(f_*, \id_{Q})$.

This doesn't look too exciting. In fact, it is just the comma category
\[ ({-} \times Q, \hat{K}) \]
where $\hat{K} : \mathbf{1} \rarr \Set$ picks out the object $K$.

Given $f : Q' \rarr Q$, we define a functor
\[ f^* : \Chu^Q_K \rarr \Chu^{Q'}_K :: (X, Q, e) \mapsto (X, Q', e \circ (1 \times f)) \]
and which is the identity on morphisms.
To verify functoriality, we only need to check that the Chu morphism condition is preserved.
That is, we must show, for any morphism $(f_*, \id_{Q}) : (X, Q, e)  \rarr (X', Q, e')$, $x \in X$, and $q' \in Q$, that
\[ e(x, f(q')) = e'(f_*(x), f(q')) \]
which follows from the Chu morphism condition on $(f_*, \id_{Q})$.

This gives an indexed category
\[ \CHU_K : \Set^{\mathsf{op}} \rarr \CAT . \]

\subsection{Grothendieck puts Chu back together again}

The fibre categories $\Chu^Q_K$ are pale reflections of the full category of Chu spaces, trivialising the contravariant component of morphisms. However, the Grothendieck construction gives us back the full category.

\begin{proposition}
\label{Chuisoprop}
\[ \int \CHU_K \cong \Chu_K.  \]
\end{proposition}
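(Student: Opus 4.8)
The plan is to exhibit an isomorphism of categories directly, by writing down a functor $\Phi : \int \CHU_K \rarr \Chu_K$ together with an inverse $\Psi$ going the other way, and checking that both are functors which compose to identities. On objects there is essentially nothing to do: an object of $\int \CHU_K$ is a pair $(Q, (X, Q, e))$ consisting of a set $Q$ together with an object of the fibre $\Chu_K^Q$, and since that fibre object already carries $Q$ as its attribute set, the assignment $\Phi(Q, (X, Q, e)) = (X, Q, e)$ is a bijection onto the objects of $\Chu_K$, with inverse $\Psi(X, A, e) = (A, (X, A, e))$.

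The content is in the arrows, and the key observation is that the contravariant indexing lines up exactly with the contravariant component of Chu morphisms. By the definition of the Grothendieck construction, an arrow $(Q, (X, Q, e)) \rarr (Q', (X', Q', e'))$ is a pair $(G, g)$ with $G : Q' \rarr Q$ and $g : G^*(X, Q, e) \rarr (X', Q', e')$ an arrow of the fibre $\Chu_K^{Q'}$. Unwinding the reindexing functor gives $G^*(X, Q, e) = (X, Q', e \circ (1 \times G))$, and a fibre arrow out of it has the form $(g_*, \id_{Q'})$ subject to $e(x, G(q')) = e'(g_*(x), q')$ for all $x, q'$. This is precisely the Chu morphism condition for the pair $(g_*, G)$. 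So I would define $\Phi(G, (g_*, \id_{Q'})) = (g_*, G)$, reading the fibre data as the covariant component $f_* = g_*$ and the index map as the contravariant component $f^* = G$; conversely $\Psi(f_*, f^*) = (f^*, (f_*, \id))$. This assignment is manifestly a bijection on hom-sets.

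The one calculation that requires care is preservation of composition, and this is where the two distinct composition conventions must be reconciled. Given composable $(G, (g_*, \id))$ and $(H, (h_*, \id))$, the Grothendieck formula gives the composite $(G \circ H, \, h \circ \CHU_K(H)(g))$. Here I would use the fact, noted in the definition of $\CHU_K$, that each reindexing functor is the identity on morphisms, so $\CHU_K(H)(g)$ is again $g_*$ (now regarded as a fibre arrow over $Q''$); composing with $(h_*, \id)$ componentwise in the fibre yields $(h_* \circ g_*, \id)$. Thus $\Phi$ sends the composite to $(h_* \circ g_*, G \circ H)$, which is exactly the componentwise composite of the Chu morphisms $(g_*, G)$ and $(h_*, H)$, since Chu composition takes $(\psi \circ \phi)_* = \psi_* \circ \phi_*$ on the covariant side and $(\psi \circ \phi)^* = \phi^* \circ \psi^*$ on the contravariant side. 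Preservation of identities is immediate, the Grothendieck identity $(\id_Q, \id_{(X,Q,e)})$ mapping to $(\id_X, \id_Q)$.

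Finally, since $\Phi$ and $\Psi$ are mutually inverse bijections on both objects and arrows, and both respect the categorical structure by the checks above, they witness $\int \CHU_K \cong \Chu_K$. The only real subtlety to keep straight throughout is the variance bookkeeping --- that the object index $G : Q' \rarr Q$ of a Grothendieck arrow plays the role of the backward-acting component $f^*$, so that the reversal built into contravariant indexing is precisely what reproduces the reversal in the attribute component of Chu morphisms; once this is fixed, everything else is routine unwinding of definitions.
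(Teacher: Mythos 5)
Your proof is correct and follows essentially the same route as the paper: unwinding the Grothendieck construction to see that objects and arrows of $\int \CHU_K$, together with the fibre morphism condition $e(x, G(q')) = e'(g_*(x), q')$ and the Grothendieck composition formula, match the data, condition, and composition of $\Chu_K$ exactly. The paper simply leaves the explicit functors $\Phi$, $\Psi$ and the composition check implicit (``immediate from this description''), whereas you spell them out.
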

\begin{proof}
Expanding the definitions, we see that objects in $\int \CHU_K$ have the form
\[ (Q, (X, Q, e : X \times Q \rarr K)) \]
while morphisms have the form
\[ (f, (f_*, \id_{Q'})) : (Q, (X, Q, e)) \rarr (Q', (X, Q', e')) \]
where $f : Q' \rarr Q$, and 
\[ (f_*, \id_{Q'}) : (X, Q', e \circ (1 \times f)) \rarr (X', Q', e') \]
is a  morphism  in $\Chu^{Q'}_K$. The morphism condition is:
\[ e(x, f(q')) = e'(f_*(x), q') . \]
This is exactly the Chu morphism condition for
\[ (f_*, f) : (X, Q, e) \rarr (X', Q', e') . \]
Composition of $(f, (f_*, \id_{Q'}))$ with $(g, (g_*, \id_{Q''}))$ is given by $(f \circ g, (g_* \circ f_*, \id_{Q''}))$.

The isomorphism with $\Chu_K$ is immediate from this description.
\end{proof}

\subsection{The Truncation Functor}
\label{truncsubsec}

The relationship between coalgebras and Chu spaces is further clarified by an indexed \emph{truncation functor} $\TT : \IndF \rarr \CHU$.

For each set $Q$ there is a functor
\[ \TT^Q : F^Q{-}\Coalg \rarr \Chu^Q_K \]
This is defined on objects by
\[ \TT^Q(X, \alpha) = (X, Q, e) \]
where
\[ e(x, q) = \left\{ \begin{array}{ll}
0, & \alpha(x)(q) = 0 \\
r, & \alpha(x)(q) = (r, x')
\end{array}
\right.
\]
The action on morphisms is trivial:
\[ \TT^Q : (h : (X, \alpha) \rarr (Y, \beta)) \mapsto (h, \id_{Q}) . \]
The verification that coalgebra homomorphisms are taken to Chu morphisms is straightforward. The fact that each $\TT^Q$ is a faithful functor is then immediate.

For each $f : Q' \rarr Q$, we have the naturality square
\begin{diagram}[width=5em]
F^Q{-}\Coalg & \rTo^{\TT^Q} & \Chu^Q_K \\
\dTo<{\IndF(f)} & & \dTo>{\CHU_K(f)} \\
F^{Q'}{-}\Coalg & \rTo_{\TT^{Q'}} & \Chu^{Q'}_K
\end{diagram}
On objects, both paths around the diagram carry a coalgebra $(X, \alpha)$ to the Chu space $(X, Q', e)$, where
\[ e(x, q') = \left\{ \begin{array}{ll}
0, & \alpha(x)(f(q')) = 0 \\
r, & \alpha(x)(f(q')) = (r, x')
\end{array}
\right.
\]
The action on morphisms in both cases is trivial: a coalgebra homomorphism $h$ is sent to the Chu morphism $(h, \id_{Q'})$.

We can summarize this as follows:
\begin{proposition}
$\TT : \IndF \lrarr \CHU$ is a strict indexed functor, which is faithful on each fibre.
\end{proposition}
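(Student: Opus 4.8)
The plan is to unpack the statement into its three constituent obligations and dispatch each in turn: that every fibre component $\TT^Q : F^Q{-}\Coalg \rarr \Chu^Q_K$ is a functor, that each such component is faithful, and that the family $\{\TT^Q\}_Q$ commutes strictly with reindexing, which is precisely what makes $\TT$ a strict indexed functor. Most of the raw verification has already been carried out in the paragraphs preceding the proposition, so the work is largely one of assembly.

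First I would confirm functoriality of each $\TT^Q$. On objects it is the stated assignment $(X,\alpha) \mapsto (X,Q,e)$, so the only real content is that a coalgebra homomorphism $h : (X,\alpha) \rarr (Y,\beta)$ is sent to a genuine Chu morphism $(h,\id_Q)$. This comes from reading off the probability component of the homomorphism equation $F^Q h \circ \alpha = \beta \circ h$: extracting the $(0,1]$-coordinate (and detecting the zero case) yields exactly $e(x,q) = e'(h(x),q)$, which is the Chu morphism condition for $(h,\id_Q)$. Preservation of identities and composition is then immediate, since $\TT^Q$ acts on arrows by $h \mapsto (h,\id_Q)$ and both identity and composition in $\Chu^Q_K$ touch only the covariant first component. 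Faithfulness follows at once: the assignment $h \mapsto (h,\id_Q)$ is visibly injective on hom-sets, so distinct homomorphisms have distinct images.

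The substantive point is strictness of the indexed structure. For each $f : Q' \rarr Q$ I would show that the naturality square commutes \emph{on the nose}, i.e.\ $\CHU_K(f) \circ \TT^Q = \TT^{Q'} \circ \IndF(f)$ as functors rather than merely up to natural isomorphism, checking this separately on objects and on arrows. On objects, one route truncates $(X,\alpha)$ over $Q$ and then reindexes along $f$, precomposing the evaluation with $1 \times f$ so as to evaluate at $f(q')$; the other reindexes the coalgebra first, replacing $\alpha$ by $t^f_X \circ \alpha$ (so that $(t^f_X\circ\alpha)(x)(q') = \alpha(x)(f(q'))$), and then truncates. Both yield the single Chu space $(X,Q',e)$ whose evaluation is governed by the case split on $\alpha(x)(f(q'))$, exactly as displayed before the proposition. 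On arrows both composites send a homomorphism $h$ to $(h,\id_{Q'})$. As these are literal equalities, $\TT$ is strict.

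I expect the one place demanding genuine care to be this last step, specifically the interaction between the truncation's case split and reindexing: one must check that detecting the zero case and extracting the probability $r$ commute with precomposition by $f$. They do, because reindexing on \emph{both} sides is pure precomposition that reaches inside $\alpha(x)$ to evaluate at $f(q')$ before the truncation reads off the first coordinate, so there is no boundary mismatch between the $\alpha(x)(f(q')) = 0$ branch and the $\alpha(x)(f(q')) = (r,x')$ branch. With that observed, the square commutes strictly and all three requirements are met.
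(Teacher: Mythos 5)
Your proof is correct and follows essentially the same route as the paper, which likewise establishes the proposition by checking fibre-wise functoriality and faithfulness of each $\TT^Q$ (via the case split on $\alpha(x)(q)$ and the injective action $h \mapsto (h,\id_Q)$) and then verifying that the reindexing squares commute on the nose, on objects and on arrows. Your added observation that the truncation's case split interacts cleanly with precomposition along $f$ is exactly the content of the paper's display of the common evaluation map $e(x,q')$ determined by $\alpha(x)(f(q'))$.
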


As an immediate corollary, we obtain:

\begin{proposition}
\label{faithindchuprop}
There is a faithful functor $\int \TT : \int \IndF \lrarr \int \CHU \; \cong \; \Chu_K$.
\end{proposition}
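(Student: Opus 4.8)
The plan is to obtain this as the Grothendieck image of the fibrewise-faithful indexed functor $\TT$ from the preceding Proposition, composed with the isomorphism of Proposition~\ref{Chuisoprop}. The general fact I would isolate first is that $\int(-)$ extends from indexed categories to indexed functors over a fixed base: given the strict indexed functor $\TT$, with components $\TT^Q : F^Q{-}\Coalg \rarr \Chu^Q_K$ and naturality squares $\CHU_K(f) \circ \TT^Q = \TT^{Q'} \circ \IndF(f)$ for every $f : Q' \rarr Q$, the induced $\int \TT$ is forced on the nose. On objects it sends $(Q, (X,\alpha))$ to $(Q, \TT^Q(X,\alpha))$, and on an arrow $(f, h) : (Q, (X,\alpha)) \rarr (Q', (Y,\beta))$ --- where $f : Q' \rarr Q$ and $h : \IndF(f)(X,\alpha) \rarr (Y,\beta)$ is a coalgebra homomorphism in $F^{Q'}{-}\Coalg$ --- it sends $(f,h)$ to $(f, \TT^{Q'}(h))$. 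The only well-definedness point is that $\TT^{Q'}(h)$ has the correct type, \ie is an arrow $\CHU_K(f)(\TT^Q(X,\alpha)) \rarr \TT^{Q'}(Y,\beta)$ in the fibre $\Chu^{Q'}_K$; this is exactly the commutativity of the naturality square, which matches the source $\CHU_K(f)(\TT^Q(X,\alpha)) = \TT^{Q'}(\IndF(f)(X,\alpha))$ to the domain of $\TT^{Q'}(h)$. Functoriality of $\int \TT$ then follows from functoriality of each $\TT^Q$ together with the composition law of the Grothendieck construction.

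Next I would prove faithfulness directly from the componentwise description of equality in $\int \IndF$. Suppose $(f, h)$ and $(f', h')$ are parallel arrows $(Q, (X,\alpha)) \rarr (Q', (Y,\beta))$ with $\int \TT(f, h) = \int \TT(f', h')$, that is $(f, \TT^{Q'}(h)) = (f', \TT^{Q'}(h'))$. Equality of arrows in a Grothendieck category is componentwise, so the base components give $f = f'$ first. This is the step that matters: only once $f = f'$ do the two coalgebra homomorphisms $h$ and $h'$ share a common domain $\IndF(f)(X,\alpha)$ and codomain $(Y,\beta)$, so that they are genuinely parallel in the single fibre $F^{Q'}{-}\Coalg$. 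The fibre components then give $\TT^{Q'}(h) = \TT^{Q'}(h')$, and since $\TT^{Q'}$ is faithful by the preceding Proposition we conclude $h = h'$, whence $(f,h) = (f',h')$.

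Finally I would compose with the isomorphism $\int \CHU_K \cong \Chu_K$ of Proposition~\ref{Chuisoprop}. An isomorphism of categories is in particular faithful, and a composite of faithful functors is faithful, so $\int \TT : \int \IndF \lrarr \int \CHU_K$ followed by $\int \CHU_K \cong \Chu_K$ yields the desired faithful functor into $\Chu_K$.

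I expect no serious obstacle here, which is consistent with the result being stated as an immediate corollary; the one point that genuinely needs care is the bookkeeping in the faithfulness argument. It is essential to extract $f = f'$ from the base \emph{before} invoking faithfulness of $\TT^{Q'}$ on the fibre: prior to pinning down the base component, $h$ and $h'$ sit over potentially different reindexings $\IndF(f)$ and $\IndF(f')$ and so are not a priori a parallel pair, and fibrewise faithfulness would have nothing to act on. Everything else is a routine unwinding of the Grothendieck construction and of the already-verified naturality of $\TT$.
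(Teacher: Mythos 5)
Your proposal is correct and follows exactly the route the paper intends: the paper states this as an immediate corollary of the fibrewise faithfulness of the strict indexed functor $\TT$, relying implicitly on the standard fact that the Grothendieck construction sends a fibrewise-faithful indexed functor to a faithful functor between the total categories, composed with the isomorphism of Proposition~\ref{Chuisoprop}. Your explicit unwinding --- well-typedness of $(f, \TT^{Q'}(h))$ via the naturality squares, extraction of $f = f'$ from the base component before invoking fibrewise faithfulness, and composition with the isomorphism --- is precisely the bookkeeping the paper leaves to the reader.
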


We can also refine the isomorphism of Theorem~\ref{isothm}. We say that an $F^Q$-coalgebra $(X, \alpha)$ is \emph{static} if for all $x \in X$:
\[ \alpha(x)(q) = (r, x') \Imp x' = x . \]
Thus in a static coalgebra, observing an answer to a question has no effect on the state.
We write $S^Q{-}\Coalg$ for the full subcategory of $F^Q{-}\Coalg$ determined by the static coalgebras. 
This extends to an indexed subcategory $\IndS$ of $\IndF$, since the functors $f^*$, for $f : Q' \rarr Q$,  carry $S^Q{-}\Coalg$ into $S^{Q'}{-}\Coalg$.

\begin{proposition}
For each set $Q$, $\Chu^Q_K$ is isomorphic to $S^Q{-}\Coalg$. Moreover this is an isomorphism of strict indexed categories.
\end{proposition}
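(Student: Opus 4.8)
The plan is to realise the isomorphism as the restriction of the truncation functor $\TT^Q$ to the full subcategory $S^Q{-}\Coalg$ of static coalgebras, and to produce its inverse explicitly. Recall that the relevant value set is $K = [0,1]$, with $\One = \{0\}$ recording the zero-probability case and $(0,1]$ the positive-probability values. First I would define a candidate inverse on objects, sending a Chu space $(X, Q, e)$ to the coalgebra $(X, \alpha)$ with $\alpha(x)(q) = 0$ when $e(x,q) = 0$ and $\alpha(x)(q) = (e(x,q), x)$ when $e(x,q) > 0$; on morphisms it acts as $(h, \id_Q) \mapsto h$. The immediate checks are that this lands in $S^Q{-}\Coalg$ (the successor component is $x$ by construction, so the coalgebra is static) and that composing either way with $\TT^Q$ is the identity on objects.

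Object bijectivity hinges entirely on staticness: in a static coalgebra the successor datum carried by $\alpha(x)(q)$ is forced to equal $x$, hence is redundant, so $\alpha$ is recoverable from the truncated evaluation $e$ alone. This is precisely where the restriction to $\IndS$ is needed, since for a general $F^Q$-coalgebra the truncation discards the successor state and cannot be inverted.

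The main step, and the principal obstacle, is fullness: I must show that every Chu morphism $(h, \id_Q) : \TT^Q(X,\alpha) \rarr \TT^Q(Y,\beta)$ between truncations of static coalgebras is the image of a coalgebra homomorphism, i.e.~that $h$ itself satisfies $F^Q h \circ \alpha = \beta \circ h$. Unpacking $F^Q h$ as pointwise postcomposition with $\One + (0,1] \times h$, I would argue pointwise in $q$ by cases. Where $\alpha(x)(q) = 0$, the Chu condition $e(x,q) = e'(h(x),q)$ forces $\beta(h(x))(q) = 0 = (F^Q h)(\alpha(x))(q)$. Where $\alpha(x)(q) = (r, x)$ with $r > 0$, the Chu condition gives $e'(h(x),q) = r > 0$, so by staticness of $\beta$ we obtain $\beta(h(x))(q) = (r, h(x))$, which is exactly $(F^Q h)(\alpha(x))(q) = (r, h(x))$. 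Staticness of $\beta$ is what pins down the successor component on the target side; without it the two sides could disagree, which is the whole reason the correspondence works only for static coalgebras. Since faithfulness of $\TT^Q$ is already recorded, this fullness together with the object bijection yields the fibrewise isomorphism $S^Q{-}\Coalg \cong \Chu^Q_K$.

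For the indexed refinement I would use that reindexing along $f : Q' \rarr Q$ preserves staticness (noted when $\IndS$ was introduced), so the fibrewise isomorphisms assemble into a candidate indexed isomorphism, and it remains only to check that the squares formed by $\IndF(f)$ and $\CHU_K(f)$ commute strictly under these isomorphisms. This is immediate from the naturality square for $\TT$ established above: chasing a static coalgebra $(X,\alpha)$ either way produces the same Chu space $(X, Q', e)$ with
\[ e(x,q') = \begin{cases} 0, & \alpha(x)(f(q')) = 0 \\ r, & \alpha(x)(f(q')) = (r, x), \end{cases} \]
and both reindexings act trivially on the underlying functions of morphisms. Hence the isomorphism is one of strict indexed categories.
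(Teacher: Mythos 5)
Your proposal is correct and follows essentially the same route as the paper: the paper also defines the inverse functor $\EE^Q : (X,Q,e) \mapsto (X,\alpha)$ with $\alpha(x)(q) = 0$ or $(e(x,q),x)$, acting as $(h,\id_Q) \mapsto h$ on morphisms, and observes it is inverse to the restriction of $\TT$ to $\IndS$. You have simply filled in the verification (the homomorphism condition via staticness of the target coalgebra, and strict compatibility with reindexing) that the paper dismisses as straightforward.
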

\begin{proof}
We can define an indexed functor 
\[ \EE^Q : \Chu^Q_K \rarr S^Q{-}\Coalg \]
\[ \EE^Q : (X, Q, e) \mapsto (X, \alpha) \]
where 
\[ \alpha(x)(q) = \left\{ \begin{array}{ll}
0, & e(x, q) = 0 \\
(r, x), & e(x, q) = r > 0.
\end{array}
\right.
\]
$\EE^Q$ takes a Chu morphism $(f, \id_{Q})$ to $f$.

It is straightforward to verify that this is an indexed functor, and inverse to the restriction of $\TT$ to $\IndS$.
\end{proof}
We can combine this with Proposition~\ref{Chuisoprop} to obtain:
\begin{theorem}
The category of Chu spaces $\Chu_K$ is isomorphic to a full subcategory of $\int \IndF$, the Grothendieck category of an indexed category of coalgebras.
\end{theorem}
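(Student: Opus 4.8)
The plan is to assemble the statement from the two isomorphisms already in hand, connected by the Grothendieck construction, and then to check that the resulting embedding into $\int \IndF$ is full. Concretely, I would establish the chain
\[ \Chu_K \;\cong\; \int \CHU_K \;\cong\; \int \IndS \;\hookrightarrow\; \int \IndF , \]
in which the first isomorphism is Proposition~\ref{Chuisoprop}, the second is obtained by applying $\int$ to the strict indexed isomorphism $\CHU_K \cong \IndS$ of the preceding Proposition, and the last arrow is the inclusion of Grothendieck categories induced by the fact that $\IndS$ is an indexed subcategory of $\IndF$. The content of the theorem is then that this inclusion exhibits $\int \IndS$ (equivalently $\Chu_K$) as a \emph{full} subcategory.

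For the middle isomorphism, I would note that the Grothendieck construction $\int({-})$ is functorial on strict indexed categories and their strict indexed functors: a strict indexed functor that is an isomorphism on each fibre and commutes strictly with reindexing lifts to an isomorphism of total categories. Since the preceding Proposition provides exactly such a functor $\EE : \CHU_K \rarr \IndS$ (with fibrewise inverse the restriction of $\TT$), the induced functor $\int \EE$, acting by $(Q, (X, Q, e)) \mapsto (Q, \EE^Q(X, Q, e))$ on objects and carrying a morphism $(f, (f_*, \id))$ to $(f, f_*)$, is an isomorphism. Strictness is what makes this routine: because $\EE$ commutes on the nose with the reindexing functors $f^*$, the Grothendieck composition law is preserved without coherence corrections, so no calculation beyond bookkeeping is needed here.

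The one step that genuinely requires care --- and the main obstacle --- is fullness of the inclusion $\int \IndS \hookrightarrow \int \IndF$. It is tempting to assert that a fibrewise full indexed subcategory yields a full subcategory after $\int$, but this holds only because reindexing preserves staticness. Spelling it out: a morphism $(f, h) : (Q, (X, \alpha)) \rarr (Q', (X', \alpha'))$ of $\int \IndF$ between two \emph{static} objects consists of $f : Q' \rarr Q$ together with a coalgebra homomorphism $h : \IndF(f)(X, \alpha) \rarr (X', \alpha')$ in $F^{Q'}{-}\Coalg$. Since $f^*$ carries static coalgebras to static coalgebras, the source $\IndF(f)(X, \alpha) = (X, t^f_X \circ \alpha)$ is static, and the target $(X', \alpha')$ is static by hypothesis; as $S^{Q'}{-}\Coalg$ is \emph{by definition} a full subcategory of $F^{Q'}{-}\Coalg$, the homomorphism $h$ already lies in $S^{Q'}{-}\Coalg$. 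Hence $(f, h)$ is a morphism of $\int \IndS$, which is precisely fullness. Combining the three links of the chain then identifies $\Chu_K$ with the full subcategory $\int \IndS$ of $\int \IndF$, completing the proof.
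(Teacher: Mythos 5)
Your proof is correct and follows exactly the route the paper intends: the paper derives this theorem by combining Proposition~\ref{Chuisoprop} with the indexed isomorphism $\CHU_K \cong \IndS$, which is precisely your chain $\Chu_K \cong \int \CHU_K \cong \int \IndS \hookrightarrow \int \IndF$. Your explicit verification of fullness of the last inclusion --- using that reindexing preserves staticness and that $S^{Q'}{-}\Coalg$ is full in $F^{Q'}{-}\Coalg$ --- is exactly the detail the paper leaves implicit, and you have it right.
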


This gives a  clear picture of how coalgebras extend Chu spaces with some `observational dynamics'.

\section{A Universal Model}

We can now define a single coalgebra which is \emph{universal for quantum systems}.

We proceed in  a number of steps:
\begin{enumerate}
\item Fix a countably-infinite-dimensional Hilbert space, e.g. $\HH_{\UU} = \ell_2(\Nat)$, with its standard orthonormal basis $\{ e_{n} \}_{n \in \Nat}$. Take $Q = \LL(\HH_{\UU})$.
Let $\Ug$ be the final coalgebra for $F^Q$.
\item Any quantum system is described by a separable Hilbert space $\KK$. In practice, the Hilbert space chosen to represent a given system will come with a preferred orthonormal basis $\{ \psi_{n} \}$. This basis will induce an isometric embedding 
\[ i : \KK \rmon \HH_{\UU} \;\; :: \;\; \psi_{n} \longmapsto e_{n} . \]
Taking $Q' = \LL(\KK)$, this induces a map $f = i^{-1} : Q \rarr Q'$. This in turn induces a functor $f^* : F^{Q'}{-}\Coalg \rarr F^Q{-}\Coalg$.
\item This functor can be applied to the coalgebra $(\Ko, \aK)$  corresponding to the Hilbert space $\KK$ to yield a coalgebra in $F^Q{-}\Coalg$.
\item Since $\Ug$ is the final coalgebra in $F^Q{-}\Coalg$, there is a unique coalgebra homomorphism $\fsem{\cdot}_{\Ko} : f^*(\Ko, \aK) \rarr \Ug$.
\item This homomorphism maps the quantum system $(\Ko, \aK)$ into $\Ug$ in a \emph{fully abstract fashion}, \ie identifying states precisely according to observational equivalence.
\item This homomorphism is an arrow in the Grothendieck category $\int \IndF$.
\item This works for \emph{all} quantum systems, with respect to a \emph{single} final coalgebra. 
\end{enumerate}

This is  a `Big Toy Model' in the sense of \cite{btm}.

We shall now investigate the nature of this coalgebraic semantics for physical systems in more detail.

\subsection{Bisimilarity and Projectivity}

Our first aim is to characterize when two states of a physical system are sent to the same element of  the final coalgebra by the semantic map $\fsem{\cdot}$.
We can call on some general coalgebraic notions for this purpose.

We shall begin with one of the key ideas in the theory of coalgebra, \emph{bisimilarity}. This can be defined in generality for coalgebras over any endofunctor \cite{DBLP:journals/tcs/Rutten00}, but we shall just give the concrete definition as it pertains to $F^Q{-}\Coalg$. Given $F^Q$-coalgebras $(X, \alpha)$ and $(Y, \beta)$, a \emph{bisimulation} is a relation $R \subseteq X \times Y$ such that:
\[ \begin{array}{rccl}
 x R y & \Imp & \forall q \in Q. & \alpha(x)(q) = 0 \Imp \beta(y)(q) = 0 \\ 
 & & \wedge &
 \alpha(x)(q) = (r, x') \Imp \beta(y)(q)  = (r, y') \; \wedge \;  x' R y' .
\end{array}
\]
We say that $x$ and $y$ are \emph{bisimilar}, and write $x \bis y$, if there is some bisimulation $R$ with $xRy$. Note that bisimilarity can hold between elements of \emph{different} coalgebras. This means that states of different systems can be compared in terms of a common notion of observable behaviour.

The above definition is given in an apparently asymmetric form, but $\bis$ is easily seen to be a symmetric relation, since the cases $\alpha(x)(q) = 0$ and $\alpha(x)(q) = (r, x')$ are mutually exclusive and exhaustive.

\begin{proposition}
Bisimilarity is an equivalence relation.
\end{proposition}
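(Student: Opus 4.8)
The plan is to show that bisimilarity $\bis$ is reflexive, symmetric, and transitive by exhibiting, in each case, a concrete bisimulation relation witnessing the required property. Since the definition of bisimulation is a closure-type condition on relations $R \subseteq X \times Y$, the standard strategy is to take the obvious candidate relation and verify directly that it satisfies the defining clause.

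For \emph{reflexivity}, on a single coalgebra $(X, \alpha)$ I would check that the diagonal relation $\Delta = \{ (x, x) \mid x \in X \}$ is a bisimulation: if $x \bis x$ via $\Delta$, then for each $q$ either $\alpha(x)(q) = 0$ (and trivially $\alpha(x)(q) = 0$ again), or $\alpha(x)(q) = (r, x')$, in which case the successor pair $(x', x')$ is again in $\Delta$, so the closure condition holds. For \emph{symmetry}, I would use the remark already recorded in the text --- that the clauses $\alpha(x)(q) = 0$ and $\alpha(x)(q) = (r, x')$ are mutually exclusive and exhaustive --- to argue that if $R$ is a bisimulation then its converse $R^{\mathsf{op}} = \{ (y, x) \mid x R y \}$ is also a bisimulation; hence $x \bis y$ implies $y \bis x$. (Alternatively one simply re-reads the definition in its symmetric form.)

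The substantive case is \emph{transitivity}. Suppose $x \bis y$ via a bisimulation $R$ and $y \bis z$ via a bisimulation $S$. I would form the relational composite $S \circ R = \{ (x, z) \mid \exists y.\, x R y \wedge y S z \}$ and show it is a bisimulation. Given $(x, z) \in S \circ R$ with witness $y$, fix $q \in Q$. If $\alpha(x)(q) = 0$, then $\beta(y)(q) = 0$ since $xRy$, and hence $\zeta(z)(q) = 0$ since $ySz$ (writing $\zeta$ for the behaviour map of the third coalgebra). If instead $\alpha(x)(q) = (r, x')$, then $xRy$ forces $\beta(y)(q) = (r, y')$ with $x' R y'$, and then $ySz$ forces $\zeta(z)(q) = (r, z')$ with $y' S z'$; the matching probability value $r$ is propagated unchanged through both steps, and the successor pair $(x', z')$ lies in $S \circ R$ by construction, as witnessed by $y'$. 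Thus $S \circ R$ satisfies the bisimulation condition, and $x \bis z$.

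I expect no real obstacle here: the only point requiring care is to track that the \emph{same} $r$ appears in all three coalgebras at each step (so that the probabilities genuinely compose), and that the intermediate witness $y'$ produced in the composite case is the correct one to certify membership of $(x', z')$ in $S \circ R$. This is precisely the standard argument that bisimilarity for a polynomial functor is an equivalence relation, specialized to $F^Q$; it could also be derived abstractly from the fact that $F^Q$ preserves weak pullbacks (noted earlier in the excerpt), but the direct relational verification is cleaner and self-contained.
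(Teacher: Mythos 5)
Your proof is correct, but it takes a genuinely different route from the paper's. The paper disposes of the proposition in one line: reflexivity and symmetry are treated as immediate (symmetry was already remarked upon just before the proposition), and transitivity is obtained ``automatically'' from the general theory --- the polynomial functor $F^Q$ preserves pullbacks, and for weak-pullback-preserving functors the standard results cited from Rutten guarantee that bisimilarity is transitive. You instead carry out the concrete relational verification: the diagonal is a bisimulation, the converse of a bisimulation is a bisimulation (using the mutual exclusivity and exhaustiveness of the clauses $\alpha(x)(q) = 0$ and $\alpha(x)(q) = (r, x')$), and the relational composite $S \circ R$ is a bisimulation, with the probability label $r$ and the intermediate witness $y'$ tracked through both steps. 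Your composite argument is precisely the place where weak-pullback preservation would be invoked abstractly, so in effect you have re-proved the relevant instance of the general fact by hand; all three verifications check out. What your approach buys is self-containedness --- a reader needs no background in the general theory of coalgebra. What the paper's approach buys is brevity and generality: the same citation covers any weak-pullback-preserving functor, not just $F^Q$, and it situates the result within the machinery (kernels of homomorphisms, strongly extensional quotients, the final-coalgebra characterization of $\bis$) that the rest of the paper relies on. You acknowledge this alternative in your closing remark, so the two arguments are in harmony; yours is simply the unfolded, elementary version of the paper's appeal to the general theorem.
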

\begin{proof}
The main point is transitivity, which follows automatically since the polynomial functor $F^Q$ preserves pullbacks \cite{DBLP:journals/tcs/Rutten00}.
\end{proof}

The key feature of bisimilarity is given by the following proposition, which is also standard for functors preserving weak pullbacks \cite{DBLP:journals/tcs/Rutten00}. We consider coalgebras for  such a functor $F$ for which a final coalgebra exists. Given an $F$-coalgebra $(X, \alpha)$ and $x \in X$, we write $\fsem{x}$ for the denotation of $x$ in the final coalgebra.
\begin{proposition}
For any $F$-coalgebras $(X, \alpha)$ and $(Y, \beta)$, and $x \in X$, $y \in Y$:
\[ \fsem{x} = \fsem{y} \IFF x \bis y . \]
\end{proposition}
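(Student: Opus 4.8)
The plan is to prove the two implications separately. The forward implication $x \bis y \Imp \fsem{x} = \fsem{y}$ is a soundness statement that follows from finality alone, whereas the converse $\fsem{x} = \fsem{y} \Imp x \bis y$ is the completeness statement, where the weak-pullback-preservation hypothesis on $F$ does its work. Throughout I write $(C, \gamma)$ for the final coalgebra and $\fsem{\cdot}_X : X \rarr C$, $\fsem{\cdot}_Y : Y \rarr C$ for the unique homomorphisms out of $(X,\alpha)$ and $(Y,\beta)$.

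For the forward direction, suppose $R$ is a bisimulation with $x R y$. First I would equip $R$ with a coalgebra structure $\rho : R \rarr F^Q R$ for which the two projections become homomorphisms. For $F^Q$ this structure is read straight off the bisimulation clauses: for $(x', y') \in R$ and $q \in Q$, set $\rho(x', y')(q) = 0$ when $\alpha(x')(q) = 0$, and $\rho(x',y')(q) = (r, (x'', y''))$ when $\alpha(x')(q) = (r, x'')$ and $\beta(y')(q) = (r, y'')$. The defining conditions on $R$ guarantee that exactly one clause applies, that the pair $(x'', y'')$ is uniquely determined (since $\alpha(x')(q)$ and $\beta(y')(q)$ are single values), and that $(x'', y'') \in R$, so $\rho$ is well defined. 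A one-line check then shows $\pi_1 : (R, \rho) \rarr (X, \alpha)$ and $\pi_2 : (R,\rho) \rarr (Y,\beta)$ are homomorphisms. By finality, $\fsem{\cdot}_X \circ \pi_1$ and $\fsem{\cdot}_Y \circ \pi_2$ are both the unique homomorphism $(R,\rho) \rarr (C,\gamma)$, hence equal; evaluating at $(x,y)$ yields $\fsem{x} = \fsem{y}$.

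For the converse I would show that the kernel relation $R_0 = \{ (x', y') \mid \fsem{x'} = \fsem{y'} \}$ is itself a bisimulation; since $(x,y) \in R_0$ this gives $x \bis y$. Fix $(x',y') \in R_0$ with common image $c = \fsem{x'} = \fsem{y'}$, and fix $q \in Q$. Since $\fsem{\cdot}_X$ is a homomorphism, $F^Q(\fsem{\cdot}_X)(\alpha(x'))(q) = \gamma(c)(q)$, and symmetrically $\beta(y')$ produces the same value $\gamma(c)(q)$. I would then split on this common value: if $\gamma(c)(q) = 0$ then necessarily $\alpha(x')(q) = 0 = \beta(y')(q)$; if $\gamma(c)(q) = (r, c')$ then $\alpha(x')(q) = (r, x'')$ and $\beta(y')(q) = (r, y'')$ with $\fsem{x''} = c' = \fsem{y''}$, so the successors again lie in $R_0$. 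This is exactly the bisimulation condition.

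The substantive step, and the place where the hypotheses are genuinely used, is the converse. In the general theory one needs $F$ to preserve weak pullbacks precisely to know that this kernel relation (equivalently, the relational composite of the graph of $\fsem{\cdot}_X$ with the converse of the graph of $\fsem{\cdot}_Y$) carries a coalgebra structure, i.e. is a bisimulation. The main thing to get right for the concrete functor $F^Q$ is that its per-question determinacy — each $\alpha(x)(q)$ is either $0$ or a single labelled successor — is what makes the successors match up uniquely and lets the explicit case analysis above stand in for the abstract weak-pullback argument. A secondary point of care is reconciling the relational definition of bisimulation used here with the span/coalgebra-structure formulation invoked in the forward direction; for $F^Q$ the two coincide, but I would state this explicitly rather than leave it implicit.
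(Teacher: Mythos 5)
Your proof is correct; the main thing to note is that the paper itself gives no proof of this proposition at all --- it is stated as a standard fact for weak-pullback-preserving functors, with a citation to \cite{DBLP:journals/tcs/Rutten00}. Your forward direction coincides with the standard argument that the citation covers: promote the relational bisimulation $R$ to a span of coalgebra homomorphisms by equipping $R$ with a coalgebra structure, and then invoke uniqueness of the homomorphism into the final coalgebra. For the converse, the standard proof invokes the general theorem that, when $F$ preserves weak pullbacks, the pullback of two homomorphisms into a common coalgebra --- here the kernel relation $\{(x',y') \mid \fsem{x'} = \fsem{y'}\}$ --- is a bisimulation; you instead verify this directly by case analysis on the concrete shape of $F^Q$, using the fact that each $\alpha(x)(q)$ is either $0$ or a single labelled successor. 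This buys a self-contained, elementary argument and makes visible exactly which property of $F^Q$ stands in for the abstract hypothesis (indeed, as you observe, your case analysis never actually uses weak pullbacks). What it gives up is the generality the statement literally claims --- ``any $F$-coalgebras'' for an arbitrary weak-pullback-preserving $F$ with a final coalgebra --- since both of your directions are phrased for $F^Q$. This mismatch is minor in context, because the paper's relational definition of $\bis$ is itself only given for $F^Q$, and your closing remarks correctly identify how the abstract argument runs in general; your point that the relational and span formulations of bisimulation need to be reconciled (they coincide for $F^Q$ precisely because it preserves weak pullbacks) is exactly the right caveat to flag.
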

Thus bisimilarity characterizes equality of denotation in the final coalgebra semantics.

We begin by characterizing bisimilarity in the coalgebra $(\Ko, \aK)$ arising from the Hilbert space $\KK$, for the functor $F^Q$, where $Q = \LL(\KK)$.

We define the usual \emph{projective equivalence} on the non-zero vectors of a Hilbert space $\Ko$ by:
\[ \psi \proj \phi \IFF \exists \lambda \in \Complex . \, \psi = \lambda \phi . \]
Thus two vectors are projectively equivalent if they belong to the same \emph{ray} or one-dimensional subspace.

\begin{proposition}
\label{projbisprop}
For any vectors $\psi, \phi \in \Ko$:
\[ \psi \proj \phi \IFF \psi \bis \phi . \]
\end{proposition}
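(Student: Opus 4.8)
The plan is to prove the two inclusions ${\proj} \subseteq {\bis}$ and ${\bis} \subseteq {\proj}$ separately, working directly from the concrete definition of the coalgebra $(\Ko, \aK)$ and of bisimulation for $F^Q$ with $Q = \LL(\KK)$. The forward inclusion will be established by exhibiting projective equivalence \emph{as} a bisimulation, and the reverse by testing bisimilarity against a single cleverly chosen question.

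For ${\proj} \subseteq {\bis}$, I would take $R$ to be the relation $\proj$ itself and verify the two defining clauses. Two facts make this immediate. First, the statistical algorithm is scale-invariant: if $\psi = \lambda\phi$ with $\lambda \neq 0$, then $\eK(\psi, S) = \norm{P_S\psi}^2/\norm{\psi}^2 = \norm{P_S\phi}^2/\norm{\phi}^2 = \eK(\phi, S)$ for every $S \in \LL(\KK)$, since the factor $|\lambda|^2$ cancels. Second, the L\"uders successor map commutes with scalars, $P_S(\lambda\phi) = \lambda P_S\phi$. Hence for $(\psi, \phi) \in R$ the probabilities agree; the `$0$' case matches on both sides; and in the `$(r, {-})$' case with $r > 0$ the successor states satisfy $P_S\psi = \lambda P_S\phi$ with both vectors non-zero (as $r > 0$ forces $P_S\phi \neq 0$), so they are again projectively equivalent and lie in $R$. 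Thus $R$ is a bisimulation and $\psi \bis \phi$.

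For the converse ${\bis} \subseteq {\proj}$, suppose $\psi \bis \phi$. The first step is to note that any bisimulation forces the probabilities to agree on \emph{every} question: matching of the first components of $\aK(\psi)(S)$ and $\aK(\phi)(S)$ yields $\eK(\psi, S) = \eK(\phi, S)$ for all $S$, the $0$ case being covered by the first clause. I would then test this identity against the single well-chosen question $S = \ray{\psi}$, the one-dimensional subspace through $\psi$. There $P_S\psi = \psi$, so $\eK(\psi, \ray{\psi}) = 1$, whence $\eK(\phi, \ray{\psi}) = 1$, i.e. $\norm{P_S\phi}^2 = \norm{\phi}^2$. Since orthogonal projection is norm-decreasing with equality exactly when the vector already lies in the subspace, this forces $\phi \in \ray{\psi}$, so $\phi = \lambda\psi$ with $\lambda \neq 0$ and hence $\psi \proj \phi$.

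I do not expect a genuine obstacle. The reverse inclusion is the conceptually interesting one, since it recovers the geometric ray from purely behavioural information; but it needs only the agreement of probabilities and no control whatsoever over successor states, and the whole difficulty collapses once one picks the test subspace $S = \ray{\psi}$. The only points requiring care are the bookkeeping ones in the forward direction: confirming that $\proj$ is closed under the successor map so that it qualifies as a bisimulation, and checking that non-zeroness of the relevant projections (guaranteed by $r > 0$) keeps every successor state inside $\Ko$.
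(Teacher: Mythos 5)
Your proof is correct, and its skeleton is the same as the paper's: one inclusion by exhibiting ${\proj}$ itself as a bisimulation (using scale-invariance of $\eK$ and $P_S(\lambda\phi) = \lambda P_S\phi$), the other by observing that bisimilarity forces $\eK(\psi, S) = \eK(\phi, S)$ for all $S$ and then deducing projective equivalence. The one real difference lies in that last step: the paper simply invokes Proposition~3.2 of \cite{btm}, which asserts that agreement of the statistical algorithm on all closed subspaces is equivalent to projective equivalence, whereas you re-prove this equivalence from scratch --- the cancellation of $|\lambda|^2$ for one direction, and for the other the single test question $S = \ray{\psi}$ together with the Pythagorean fact that $\norm{P_S \phi} = \norm{\phi}$ forces $\phi \in S$. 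What this buys is self-containment: your argument uses nothing beyond the definitions in the present paper, and your ray test is essentially the argument that establishes the cited proposition in \cite{btm}. The bookkeeping points you flag are also handled correctly: $r > 0$ guarantees the successor states $P_S\psi$, $P_S\phi$ are non-zero and hence lie in $\Ko$, and $\ray{\psi}$ is a legitimate question since one-dimensional subspaces are closed.
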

\begin{proof}
Firstly, recall the definition of $\eK$ from Section~2.3. We can describe the bisimilarity condition on a relation $R \subseteq \Ko^2$ for the coalgebra $(\Ko, \aK)$ more directly as follows:
\[ \psi \, R \, \phi \Imp \forall S \in \LL(\HH). \, \eK(\psi, S) = \eK(\phi, S) \; \wedge \; (P_S \psi) \, R \, (P_S \phi) . \]
Thus if $\psi \bis \phi$, then for all $S \in \LL(\KK)$, $\eK(\psi, S) = \eK(\phi, S)$, and hence $\psi \proj \phi$ by Proposition~3.2 of \cite{btm}.
For the converse, it suffices to show that the relation ${\proj} \subseteq \Ko^2$ is a bisimulation.
If $\psi = \lambda \phi$, then for all $S$, $\eK(\psi, S) = \eK(\phi, S)$ by Proposition~3.2 of \cite{btm}, and $P_S\psi = \lambda P_S \phi$, so $\proj$ is a bisimulation as required.
\end{proof}

We now show that bisimilarity in Hilbert spaces is stable under transport across fibres by isometries.

Firstly, we have a general property of fibred coalgebras.
\begin{proposition}
\label{surjprop}
If $f : Q' \rarr Q$ is surjective, then bisimulation on the $F^{Q'}$-coalgebra $f^{*}(X, \alpha)$ coincides with bisimulation on the $F^{Q}$-coalgebra $(X, \alpha)$.
\end{proposition}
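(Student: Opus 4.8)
The plan is to establish the stronger fact that a relation $R \subseteq X \times X$ is a bisimulation on the $F^{Q'}$-coalgebra $f^{*}(X, \alpha)$ if and only if it is a bisimulation on the $F^Q$-coalgebra $(X, \alpha)$. Since bisimilarity $\bis$ is by definition the union of all bisimulations, coincidence of the two \emph{notions} of bisimulation immediately yields coincidence of the two bisimilarity relations, which is what the proposition asserts.

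First I would unfold the definition of $f^{*}$. Writing $\alpha' = t^f_X \circ \alpha$ for the behaviour map of $f^{*}(X, \alpha)$, the identity $t^f_X(\Theta) = \Theta \circ f$ gives $\alpha'(x)(q') = \alpha(x)(f(q'))$ for every $x \in X$ and $q' \in Q'$. Substituting this into the bisimulation conditions for $(X, \alpha')$ shows that they are \emph{verbatim} the bisimulation conditions for $(X, \alpha)$, except that the question $q$ ranges over the values $f(q')$ as $q'$ ranges over $Q'$ --- that is, over the image $f(Q') \subseteq Q$ rather than over all of $Q$.

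This reduces the whole claim to a comparison of quantifier ranges. For the direction from an $\alpha$-bisimulation to an $\alpha'$-bisimulation, I take $x \mathbin{R} y$ and an arbitrary $q' \in Q'$, and simply instantiate the $\alpha$-bisimulation conditions at $q = f(q')$; this uses no hypothesis on $f$. For the converse I take $x \mathbin{R} y$ and an arbitrary $q \in Q$, use surjectivity of $f$ to pick some $q' \in Q'$ with $f(q') = q$, and instantiate the $\alpha'$-bisimulation conditions at that $q'$. In both the `no' case $\alpha(x)(q) = 0$ and the `yes' case $\alpha(x)(q) = (r, x')$ the conclusion transfers unchanged, including the residual requirement $x' \mathbin{R} y'$.

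There is no genuine obstacle here: the entire content is the observation that precomposition with $f$ merely reindexes the questions, and that a surjection reindexes them \emph{onto} all of $Q$. Accordingly, the only step where any hypothesis is used is the second direction, where surjectivity supplies a preimage $q'$ for each $q \in Q$ and thereby promotes the quantification over $f(Q')$ to quantification over all of $Q$. It is worth flagging, to show the hypothesis is not spurious, that without surjectivity one still obtains that every $\alpha$-bisimulation is an $\alpha'$-bisimulation, so that $\bis$ on $(X, \alpha)$ is contained in $\bis$ on $f^{*}(X, \alpha)$ but may be strictly smaller, since $\alpha'$ can no longer distinguish states using questions outside $f(Q')$.
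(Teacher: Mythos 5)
Your proof is correct and takes essentially the same approach as the paper: unwinding the definition of $f^{*}$ to see that the two bisimulation conditions differ only in whether the questions range over all of $Q$ or over the image $f(Q')$, with surjectivity making these quantifier ranges equivalent. The paper's proof is a terser statement of exactly this observation; your additional remark that one containment holds without surjectivity is a correct (and harmless) elaboration.
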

\begin{proof}
Unwinding the definitions of the two bisimulation conditions on relations, the only difference is that one quantifies over questions $q \in Q$, and the other over questions $f(q')$, for $q' \in Q'$. If $f$ is surjective, these are equivalent.
\end{proof}

Given a Hilbert space $\KK$ and an isometric embedding $i : \KK \rmon \HH_{\UU}$, let  $Q = \LL(\HH_{\UU})$, $Q' = \LL(\KK)$, $f = i^{-1} : Q \rarr Q'$. Then the $F^Q$-coalgebra $f^*(\Ko, \aK)$ is $(\Ko, \beta)$, where:
\[ \beta(\psi)(S) = \aK(\psi)(i^{-1}(S)) . \]
\begin{proposition}
Bisimulation on the elements of the $F^Q$-coalgebra $(\Ko, \beta)$ coincides with bisimulation on the $F^{Q'}$-coalgebra $(\Ko, \aK)$. If we identify $\KK$ with the subspace $\HH' \rinc \HH_{\UU}$ determined by the image of $i$, it also coincides with bisimulation on $\HH'$. It is also the restriction of bisimulation on $\HH_{\UU}$.
\end{proposition}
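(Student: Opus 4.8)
The plan is to establish the three coincidences in turn, reducing each to results already in hand. For the first, the text has already identified $f^{*}(\Ko,\aK)$ with $(\Ko,\beta)$, so by Proposition~\ref{surjprop} it suffices to check that $f = i^{-1} : \LL(\HH_{\UU}) \rarr \LL(\KK)$ is \emph{surjective}. Given any $T \in \LL(\KK)$, its image $i(T)$ is a closed subspace of $\HH_{\UU}$: since $i$ is an isometric embedding, $i(T)$ is isometric to $T$ and hence complete, and a complete subspace of a Hilbert space is closed, so $i(T) \in \LL(\HH_{\UU})$. As $i$ is injective we have $i^{-1}(i(T)) = T$, so $T$ lies in the image of $f$. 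Surjectivity of $f$ then yields the first claim by Proposition~\ref{surjprop}.

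For the second coincidence, I would observe that $i$ is an \emph{isomorphism of coalgebras} onto its image. Writing $\HH' = i(\KK)$ and $Q'' = \LL(\HH')$, the isometry restricts to a bijection $\Ko \rarr \HH'_{\circ}$ and induces a bijection $\LL(\KK) \rarr Q''$ via $S \mapsto i(S)$. Because $i$ preserves the inner product it intertwines the data of the two coalgebras: $P_{i(S)}(i\psi) = i(P_{S}\psi)$, and hence $\eK(\psi, S) = e_{\HH'}(i\psi, i(S))$, so $i$ carries the $F^{Q'}$-coalgebra $(\Ko,\aK)$ isomorphically onto the $F^{Q''}$-coalgebra $(\HH'_{\circ}, a_{\HH'})$. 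An isomorphism of coalgebras transports bisimulations to bisimulations in both directions, so under the identification $\KK \cong \HH'$ bisimulation on $(\Ko,\aK)$ coincides with bisimulation on $\HH'$.

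For the third coincidence, I would invoke Proposition~\ref{projbisprop}, which holds for any Hilbert-space coalgebra and identifies bisimilarity with projective equivalence. Applying it to $\HH'$ and to $\HH_{\UU}$ separately gives $\psi \bis \phi \IFF \psi \proj \phi$ in each. Since $\HH'$ is a linear subspace of $\HH_{\UU}$, for non-zero $\psi, \phi \in \HH'$ the condition $\psi = \lambda\phi$ holds in $\HH_{\UU}$ precisely when it holds in $\HH'$; thus projective equivalence on $\HH_{\UU}$, restricted to the non-zero vectors of $\HH'$, is exactly projective equivalence on $\HH'$. Combining the two instances of Proposition~\ref{projbisprop} yields that bisimulation on $\HH'$ is the restriction to $\HH'_{\circ}$ of bisimulation on $\HH_{\UU}$. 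The only step that is not pure bookkeeping is the surjectivity claim of the first part, which rests on the single analytic fact that the isometric image of a closed subspace of a Hilbert space is again closed; once this is granted, Propositions~\ref{surjprop} and~\ref{projbisprop}, together with invariance of bisimilarity under coalgebra isomorphism, do all the remaining work.
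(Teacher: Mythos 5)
Your proposal is correct and follows essentially the same route as the paper's own proof: surjectivity of $i^{-1}$ (via closedness of the isometric image of a closed subspace) combined with Proposition~\ref{surjprop} for the first claim, transport of structure along the isometric embedding for the second, and Proposition~\ref{projbisprop} plus the fact that projective equivalence on a subspace is the restriction of projective equivalence on the ambient space for the third. Your treatment of the second claim is in fact slightly more explicit than the paper's, spelling out the intertwining relation $P_{i(S)}(i\psi) = i(P_S\psi)$ where the paper only notes agreement of the evaluation functions, but the underlying argument is the same.
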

\begin{proof}
Since $i$ is an isometry, the direct image $i(S)$ of a closed subspace of $\KK$ is a closed subspace of $\HH_{\UU}$, and since $i$ is injective, $i^{-1}(i(S)) = S$. Thus $i^{-1}$ is surjective, yielding the first statement by Proposition~\ref{surjprop}. The fact that $i$ is an isometric embedding also guarantees that $\eK(\psi, S) = e_{H_{\UU}}(\psi, S)$ for $\psi \in \HH'$, $S \in \LL(\HH')$. Finally, by Proposition~\ref{projbisprop}, bisimulation  on Hilbert spaces coincides with projective equivalence, and projective equivalence on $\HH'$ is the restriction of projective equivalence on $\HH_{\UU}$.
\end{proof}

Putting these results together, we have the following:
\begin{theorem}
\label{extqthm}
Let $\fsem{\cdot}_{\Ko} : f^*(\Ko, \aK) \rarr \Ug$ be the final coalgebra semantics for $\Ko$ with respect to the isometric embedding $i : \KK \rmon \HH_{\UU}$. Then for any $\psi, \phi \in \Ko$:
\[ \fsem{\psi}_{\Ko} = \fsem{\phi}_{\Ko} \IFF \psi \bis \phi \IFF \psi \proj \phi . \]
\end{theorem}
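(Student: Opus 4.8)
The plan is to assemble the theorem entirely from the results established in the run-up to the statement, by chaining together three equivalences. The key preliminary observation is that, by the computation immediately preceding the theorem, $f^*(\Ko, \aK)$ is the $F^Q$-coalgebra $(\Ko, \beta)$ with $\beta(\psi)(S) = \aK(\psi)(i^{-1}(S))$, and $\Ug$ is the final coalgebra for $F^Q$. So all three notions appearing in the statement --- equality of denotation in $\Ug$, bisimilarity, and projective equivalence --- can be connected using machinery already in place.

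First I would dispatch the equivalence $\fsem{\psi}_{\Ko} = \fsem{\phi}_{\Ko} \IFF \psi \bis \phi$. This is a direct instance of the general proposition that bisimilarity characterizes equality of denotation in the final coalgebra semantics. Since $F^Q$ is a polynomial functor it preserves weak pullbacks, and by construction it admits the final coalgebra $\Ug$; hence that general proposition applies verbatim to the $F^Q$-coalgebra $(\Ko, \beta) = f^*(\Ko, \aK)$, yielding this equivalence with $\bis$ interpreted in $(\Ko, \beta)$.

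Next I would reduce bisimilarity in the transported coalgebra to bisimilarity in the original one, and thence to projectivity. The proposition immediately preceding the theorem establishes precisely that bisimulation on the $F^Q$-coalgebra $(\Ko, \beta)$ coincides with bisimulation on the $F^{Q'}$-coalgebra $(\Ko, \aK)$; this is where the surjectivity of $f = i^{-1}$ --- which holds because $i$ is an isometric embedding, so $i^{-1}(i(S)) = S$ --- feeds into Proposition~\ref{surjprop}. Then Proposition~\ref{projbisprop} identifies bisimulation on $(\Ko, \aK)$ with projective equivalence $\proj$. Concatenating these two steps gives $\psi \bis \phi \IFF \psi \proj \phi$, completing the chain of equivalences in the stated order.

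There is no substantive obstacle here: every link has been prepared in advance, so the argument is in essence bookkeeping. The one point worth flagging explicitly is that the symbol $\bis$ is used for bisimilarity relative to several coalgebras at once --- $(\Ko, \beta)$, $(\Ko, \aK)$, the subspace $\HH'$, and $\HH_{\UU}$ --- and the preceding proposition is exactly what guarantees these all agree on $\Ko$. This is what makes the notation $\psi \bis \phi$ in the statement unambiguous, and it is the reason the three equivalences compose without friction.
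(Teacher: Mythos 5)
Your proposal is correct and matches the paper's own argument, which introduces the theorem with ``Putting these results together'' and likewise assembles it from the general final-coalgebra/bisimilarity proposition, Proposition~\ref{surjprop} applied via surjectivity of $f = i^{-1}$, and Proposition~\ref{projbisprop}. Your explicit remark that the preceding proposition is what makes $\bis$ unambiguous across the several coalgebras is a faithful gloss on the paper's intent.
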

Thus the \emph{strongly extensional quotient} \cite{DBLP:journals/tcs/Rutten00} of the coalgebra $(\Ko, \aK)$ is the \emph{projective coalgebra} $(\PP (\KK), \baK)$, where $\PP(\KK)$ is the set of rays or one-dimensional subspaces of $\KK$, and $\baK$ is defined by:
\[ \baK(\ray{\psi}) = \left\{ \begin{array}{ll}
0, & \alpha(\psi) = 0 \\
(r, \ray{\phi}) & \alpha(\psi) = (r, \phi).
\end{array}
\right.
\]
Here $\ray{\psi} = \{ \lambda \psi \mid \lambda \in \Complex \}$ is the ray generated by $\psi$.

\paragraph{Remark} There is a subtlety lurking here, which is worthy of comment.
When we consider an extension of a Hilbert space to a larger one, $\HH' \rinc \HH$, the characteristic quantum phenomenon of \emph{incompatibility} can arise; a subspace $S$ of $\HH$ may be incompatible with the subspace $\HH'$ (so that e.g. the corresponding projectors do not commute). The characterization of bisimulation as projective equivalence shows that this notion is nevertheless stable under such extensions. However, we can expect incompatibility to be reflected in some fashion in the coalgebraic approach, in particular in the development of a suitable \emph{coalgebraic logic}.

\subsection{Representing Physical Symmetries}

We shall now show that the passage to the Grothendieck category of coalgebras does succeed in alleviating the problem of excessive rigidity of coalgebras as discussed in Section~\ref{critcosubsec}.
Our strategy will be to lift the Representation Theorem~3.15 from \cite{btm} from Chu spaces to coalgebras, using the results of Section~\ref{truncsubsec}.

We consider a morphism in $\int \IndF$ between representations of Hilbert spaces.
Such a morphism has the form
\[ h : f^*(\Ho, \aH) \rarr (\Ko, \aK) \]
where $\HH$ and $\KK$ are any Hilbert spaces, and writing $Q = \LL(\HH)$, $Q' = \LL(\KK)$, the functor $f^*$ is induced by a map $f : Q' \rarr Q$, and $h$ is a homomorphism of $F^{Q'}$-coalgebras.

By Proposition~\ref{faithindchuprop}, 
\[ (h, f) : (\Ho, \LL(\HH), \eH) \rarr (\Ko, \LL(\KK), \eK) \]
is a Chu morphism.
By Proposition~3.2 and the remark following Theorem 3.10 of \cite{btm}, the Chu morphism induced by the biextensional collapse of these Chu spaces is
\[ (\PP h, f) : (\PP(\Ho), \LL(\HH), \beH) \rarr (\PP(\Ko), \LL(\KK), \beK) \]
where $\PP(h)(\ray{\psi}) = \overline{h(\psi)}$. By Theorem~\ref{extqthm}, the induced coalgebra homomorphism on the strongly extensional quotients of the corresponding coalgebras is
\[ \PP h : f^*(\PP(\HH), \baH) \rarr (\PP(\KK), \baK) . \]
We can now use Theorem~3.12 of \cite{btm}:
\begin{theorem}
\label{mainth}
Let $\HH$, $\KK$ be Hilbert spaces of dimension greater than 2. Consider a Chu morphism
\[ (f_*, f^*) :   (\PP(\HH), \LL(\HH), \beH) \rarr (\PP(\KK), \LL(\KK), \beK) . \]
where $f_*$ is injective. Then there is a semiunitary (\ie a unitary or antiunitary) $U : \HH \rarr \KK$ such that $f_* = \PP(U)$. $U$ is unique up to a phase. Moreover, $f^{*}$ is then uniquely determined as $U^{-1}$.
\end{theorem}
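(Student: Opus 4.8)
The plan is to reduce the theorem to Wigner's theorem on symmetries of quantum state spaces, with the Chu-theoretic data serving only to manufacture a transition-probability-preserving map of ray spaces. The bridge is the observation that, for a one-dimensional subspace $\ray{\chi}$ with $\chi$ a unit vector, the biextensional evaluation specializes to a transition probability,
\[ \beH(\ray{\psi}, \ray{\chi}) = \frac{\norm{P_{\ray{\chi}}\psi}^2}{\norm{\psi}^2} = \frac{|\ip{\chi}{\psi}|^2}{\norm{\psi}^2}, \]
together with the two extreme readings $\beH(\ray{\psi}, S) = 1 \Iff \ray{\psi} \subseteq S$ and $\beH(\ray{\psi}, S) = 0 \Iff \ray{\psi} \orth S$ (and likewise for $\KK$).

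First I would extract the action of $f^{*}$ from the Chu condition $\beH(\ray{\psi}, f^{*}(T)) = \beK(f_{*}(\ray{\psi}), T)$. Reading off the value $1$ gives, for every $T \in \LL(\KK)$,
\[ \ray{\psi} \subseteq f^{*}(T) \IFF f_{*}(\ray{\psi}) \subseteq T, \]
so that $f^{*}(T)$ is precisely the join of the rays $\ray{\psi}$ with $f_{*}(\ray{\psi}) \subseteq T$; in particular $f^{*}$ is already determined by $f_{*}$. Specializing to $T = f_{*}(\ray{\phi})$ and noting that $f_{*}(\ray{\psi}) \subseteq f_{*}(\ray{\phi})$ forces equality of these two rays, the injectivity of $f_{*}$ yields the key identity $f^{*}(f_{*}(\ray{\phi})) = \ray{\phi}$.

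The preservation of transition probabilities is now immediate: substituting $T = f_{*}(\ray{\phi})$ into the Chu condition and applying this identity,
\[ \beK(f_{*}(\ray{\psi}), f_{*}(\ray{\phi})) = \beH(\ray{\psi}, f^{*}(f_{*}(\ray{\phi}))) = \beH(\ray{\psi}, \ray{\phi}), \]
so $f_{*}$ is a transition-probability-preserving injection $\PP(\HH) \rarr \PP(\KK)$. At this point I would invoke Wigner's theorem, whose hypotheses are now exactly met: there is a semiunitary $U : \HH \rarr \KK$, unique up to a phase, with $f_{*} = \PP(U)$. I regard the appeal to Wigner as the genuine hard core of the argument; everything Chu-specific reduces to the elementary bookkeeping above. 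The hypothesis $\dim > 2$ is more than enough here, and is exactly what is needed if one instead extracts only the orthogonality-preservation statement ($\ray{\psi} \orth \ray{\phi} \Iff f_{*}(\ray{\psi}) \orth f_{*}(\ray{\phi})$, read off from the value $0$) and appeals to Uhlhorn's variant, thereby avoiding the dimension-two Bloch-sphere exceptions.

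Finally I would recover $f^{*}$. Combining $f_{*} = \PP(U)$ with the containment characterization, $\ray{\psi} \subseteq f^{*}(T)$ holds iff $\ray{U\psi} \subseteq T$ iff $\psi \in U^{-1}(T)$, so $f^{*}(T) = U^{-1}(T)$ for every $T \in \LL(\KK)$; the phase ambiguity in $U$ does not affect $U^{-1}$ on subspaces, so $f^{*}$ is uniquely determined as claimed.
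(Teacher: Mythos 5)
The first thing to note is that the paper does not prove this theorem at all: it is imported verbatim from the earlier work, introduced by the sentence ``We can now use Theorem~3.12 of \cite{btm}''. So your proposal has to be measured against the proof given there. Your Chu-level bookkeeping is correct and matches the reductions carried out in \cite{btm}: reading the evaluation at the values $1$ and $0$, the containment characterization $\ray{\psi} \subseteq f^{*}(T) \Iff f_{*}(\ray{\psi}) \subseteq T$, the identity $f^{*}(f_{*}(\ray{\phi})) = \ray{\phi}$ from injectivity, the resulting preservation of transition probabilities, and the recovery $f^{*}(T) = U^{-1}(T)$.

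The genuine gap is the sentence in which you invoke Wigner's theorem ``whose hypotheses are now exactly met'' to obtain a \emph{semiunitary} $U : \HH \rarr \KK$. What you have at that point is an injective, transition-probability-preserving map between the projective spaces of two \emph{different} Hilbert spaces, with no surjectivity onto $\PP(\KK)$ available. In this non-surjective, two-space setting, Wigner-type theorems conclude only that $f_{*} = \PP(U)$ for a linear or antilinear \emph{isometry} $U$; they cannot conclude unitarity, because for any proper isometric embedding $V : \HH \rarr \KK$ the map $\PP(V)$ is injective and preserves transition probabilities, yet is induced by no semiunitary. The conclusion of the theorem --- a surjective $U$, with $f^{*}$ its genuine two-sided inverse --- is therefore strictly stronger than anything that can follow from probability-preservation of $f_{*}$ alone, and your remark that ``everything Chu-specific reduces to the elementary bookkeeping'' is precisely where the loss occurs: surjectivity is the one part of the statement that needs the Chu condition for \emph{arbitrary} $T \in \LL(\KK)$, not merely for $T$ in the image of $f_{*}$. (Your aside about Uhlhorn's variant inherits the same problem, since Uhlhorn's theorem assumes bijectivity as well.)

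Fortunately the gap closes with one more use of the hypothesis. By your final step, $f^{*}(T) = U^{-1}(T)$ (inverse image) for every $T \in \LL(\KK)$, and this derivation did not use surjectivity. Suppose $\mathrm{ran}(U) \neq \KK$; the range of an isometry is closed, so there is a unit vector $v \orth \mathrm{ran}(U)$. Take a unit vector $\psi \in \HH$, set $u = U\psi$, and let $T$ be the ray spanned by $u + v$. If $Ux = \lambda(u+v)$ for some $x$, then $\lambda v = \lambda(u+v) - \lambda u \in \mathrm{ran}(U)$, forcing $\lambda = 0$ and hence $x = 0$; thus $U^{-1}(T) = \{0\}$. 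The Chu condition at the pair $(\ray{\psi}, T)$ then gives
\[ 0 \; = \; \beH(\ray{\psi}, U^{-1}(T)) \; = \; \beK(\ray{u}, T) \; = \; \left| \ip{(u+v)/\sqrt{2}}{u} \right|^{2} \; = \; 1/2 , \]
a contradiction. Hence $U$ is surjective, i.e.\ semiunitary, and $f^{*} = U^{-1}$ in the strong sense claimed. With this paragraph added, your argument is complete and is in substance the argument of \cite{btm}; uniqueness of $U$ up to phase is indeed part of the standard Wigner statement.
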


Since any coalgebra homomorphism gives rise to a Chu morphism, this will allow us to lift fullness of the representation in Chu spaces to the coalgebraic setting.

\begin{proposition}
If $U : \HH \rarr \KK$ is a semiunitary, then $\Uo : f^{*}(\Ho, \aH) \rarr (\Ko, \aK)$ is a coalgebra homomorphism, where $f^{*} = U^{-1}$.
\end{proposition}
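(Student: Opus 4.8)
The plan is to unwind both coalgebra structures and reduce the homomorphism condition to two pointwise identities, each of which then follows from the defining properties of a semiunitary. Writing $f = U^{-1} : \LL(\KK) \rarr \LL(\HH)$ for the action of $U^{-1}$ on closed subspaces, and setting $Q = \LL(\HH)$, $Q' = \LL(\KK)$, the transported coalgebra $f^{*}(\Ho, \aH)$ has carrier $\Ho$ and behaviour map $\beta = t^{f}_{\Ho} \circ \aH$, so that $\beta(\psi)(T) = \aH(\psi)(U^{-1}(T))$ for $T \in Q'$. The target $(\Ko, \aK)$ is the $F^{Q'}$-coalgebra associated with $\KK$ by the same recipe that defines $\aH$, and $\Uo = U|_{\Ho} : \Ho \rarr \Ko$ is well defined since $U$ is a bijection fixing $0$. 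The homomorphism condition to verify is $F^{Q'}(\Uo) \circ \beta = \aK \circ \Uo$. Since $F^{Q'}(\Uo)$ acts by post-composition with $1 + (1 \times \Uo)$, evaluating both sides at $\psi \in \Ho$ and $T \in Q'$ reduces the statement to two facts: (a) the probabilities agree, $\eH(\psi, U^{-1}(T)) = \eK(U\psi, T)$, so that both behaviour functions land in the same branch of the case split and record the same $r$; and (b) in the positive branch the successor states agree, $U(P_{U^{-1}(T)}\psi) = P_{T}(U\psi)$.

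The key step, from which (b) is immediate, is the intertwining relation $U \circ P_{U^{-1}(T)} = P_{T} \circ U$. Here I would use that a semiunitary is a bijective, norm-preserving, orthogonality-preserving (anti)linear map: it carries the closed subspace $U^{-1}(T)$ onto $T$ and the complement $(U^{-1}(T))^{\perp}$ onto $T^{\perp}$. Applying $U$ to the orthogonal decomposition $\psi = P_{U^{-1}(T)}\psi + P_{(U^{-1}(T))^{\perp}}\psi$ expresses $U\psi$ as the sum of a vector in $T$ and a vector in $T^{\perp}$, and uniqueness of the orthogonal decomposition of $U\psi$ then forces $P_{T}(U\psi) = U(P_{U^{-1}(T)}\psi)$. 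Fact (a) follows from the same relation together with norm-preservation: $\eK(U\psi, T) = \norm{P_{T}U\psi}^{2} / \norm{U\psi}^{2} = \norm{U(P_{U^{-1}(T)}\psi)}^{2} / \norm{U\psi}^{2} = \norm{P_{U^{-1}(T)}\psi}^{2} / \norm{\psi}^{2} = \eH(\psi, U^{-1}(T))$. With (a) and (b) established, both branches of $\beta(\psi)(T)$ and $\aK(U\psi)(T)$ match after applying $1 + (1 \times \Uo)$, giving the homomorphism condition.

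I expect the main obstacle to be the careful treatment of the antiunitary case in the intertwining relation, where one must confirm that orthogonality---and hence the splitting into $T$ and $T^{\perp}$---survives despite $U$ being only conjugate-linear, and that $\norm{U\psi} = \norm{\psi}$ still holds. Both do: an antiunitary satisfies $\ip{U\phi}{U\chi} = \overline{\ip{\phi}{\chi}}$, so it preserves norms and sends orthogonal pairs to orthogonal pairs. The remaining points---that $U^{-1}(T)$ is closed and that $U$ restricts to a bijection of $U^{-1}(T)$ onto $T$---are routine consequences of the bijectivity and continuity of $U$.
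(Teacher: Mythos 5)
Your proof is correct and takes essentially the same route as the paper: the paper reduces the homomorphism condition to the intertwining relation $P_{S}(U\psi) = U(P_{U^{-1}(S)}\psi)$, citing Proposition~3.13 of \cite{btm} for it, and that relation is exactly your fact (b), with your fact (a) the probability-preservation consequence handled in the same cited argument. The only difference is that you prove the relation inline (via uniqueness of orthogonal decomposition and norm preservation, including the antiunitary case) rather than by citation, which makes your version self-contained but not mathematically different.
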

\begin{proof}
This follows by the same argument as Proposition~3.13 of \cite{btm}. In particular, the fact that $\Uo$ is a coalgebra homomorphism follows from the relation
\[ P_{S} (U \psi) = U(P_{U^{-1}(S)} \psi)  \]
which is shown there.
\end{proof}

We must now account for the injectivity hypothesis in Theorem~\ref{mainth}.
The following properties of coalgebras and Chu spaces respectively are standard.
\begin{proposition}
If $F$ preserves weak pullbacks, the kernel of an $F$-coalgebra homomorphism is a bisimulation. Hence
if $(A, \alpha)$ is a strongly extensional $F$-coalgebra, on which bisimilarity is equality, then any homomorphism with $(A, \alpha)$ as domain must be injective.
\end{proposition}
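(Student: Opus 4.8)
The plan is to establish the two assertions in sequence: first that the kernel of any $F$-coalgebra homomorphism carries a coalgebra structure making it a bisimulation, and then to read off injectivity as an immediate corollary in the strongly extensional case. For the first part I would take a homomorphism $h : (A, \alpha) \rarr (B, \beta)$ and form its kernel
\[ R = \{ (a, a') \mid h(a) = h(a') \} , \]
which is precisely the pullback of $h$ against itself, equipped with projections $\pi_1, \pi_2 : R \rarr A$ satisfying $h \circ \pi_1 = h \circ \pi_2$. The goal is to produce a map $\gamma : R \rarr FR$ turning both $\pi_1$ and $\pi_2$ into $F$-coalgebra homomorphisms onto $(A, \alpha)$, which is exactly what it means for $R$ to be a bisimulation. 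First I would check that the composites $\alpha \circ \pi_1$ and $\alpha \circ \pi_2$ from $R$ to $FA$ agree after post-composition with $Fh$: applying the homomorphism condition $Fh \circ \alpha = \beta \circ h$ on each side gives
\[ Fh \circ \alpha \circ \pi_1 = \beta \circ h \circ \pi_1 = \beta \circ h \circ \pi_2 = Fh \circ \alpha \circ \pi_2 , \]
so $(\alpha \circ \pi_1, \alpha \circ \pi_2)$ is a cone over the cospan formed by $Fh : FA \rarr FB$ on both legs.

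At this point the hypothesis that $F$ preserves weak pullbacks does the real work. Applying $F$ to the kernel pullback square yields a weak pullback, so the cone just constructed factors through $FR$ via some mediating map $\gamma : R \rarr FR$ with $F\pi_1 \circ \gamma = \alpha \circ \pi_1$ and $F\pi_2 \circ \gamma = \alpha \circ \pi_2$. These two equations say exactly that $\pi_1$ and $\pi_2$ are coalgebra homomorphisms from $(R, \gamma)$ to $(A, \alpha)$, so $R$ is a bisimulation, as required. The second assertion then follows with no further calculation: if $(A, \alpha)$ is strongly extensional, bisimilarity coincides with equality on $A$, so for any homomorphism $h$ out of $(A, \alpha)$ any pair $(a, a')$ with $h(a) = h(a')$ lies in the bisimulation $R$ and hence consists of bisimilar --- and therefore equal --- elements; thus $h$ is injective.

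I expect the only genuine subtlety to lie in the careful use of the weak pullback property in the first part. Unlike a true pullback it guarantees the \emph{existence} but not the \emph{uniqueness} of the mediating map $\gamma$, so one must resist treating $\gamma$ as canonical or reasoning about it as though it were determined; the rest is a routine diagram chase. It is worth flagging that this is the same preservation property already invoked for transitivity of bisimilarity and for the coincidence of bisimilarity with equality of denotation in the final coalgebra, so the argument is best read as one further instance of that single structural fact.
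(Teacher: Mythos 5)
Your proof is correct. The paper gives no proof of this proposition at all—it is stated as a standard fact (from Rutten's theory of universal coalgebra)—and your argument is exactly that standard one: form the kernel as a pullback, use the homomorphism square to build a cone over $Fh$ against itself, invoke preservation of weak pullbacks to obtain a (not necessarily unique) coalgebra structure $\gamma : R \rarr FR$ making both projections homomorphisms, and then read off injectivity from strong extensionality; your closing caveat about existence versus uniqueness of the mediating map is also the right point to flag.
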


\begin{proposition}
If $f : C_{1} \rarr C_{2}$ is a morphism of separated Chu spaces, and $f^{*}$ is surjective, then $f_{*}$ is injective.
\end{proposition}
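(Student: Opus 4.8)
The plan is to unwind the definitions and perform a short definition chase; no heavy machinery is needed. Write $C_1 = (X_1, A_1, e_1)$ and $C_2 = (X_2, A_2, e_2)$, so that $f = (f_* : X_1 \rarr X_2,\ f^* : A_2 \rarr A_1)$ satisfies the Chu morphism condition $e_1(x, f^*(a_2)) = e_2(f_*(x), a_2)$ for all $x \in X_1$ and $a_2 \in A_2$. Recall that $C_1$ being \emph{separated} means that its points are distinguished by attributes: if $e_1(x, a_1) = e_1(y, a_1)$ for every $a_1 \in A_1$, then $x = y$.

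First I would suppose $f_*(x) = f_*(y)$ for some $x, y \in X_1$, with the aim of concluding $x = y$. Fix an arbitrary attribute $a_1 \in A_1$. The key step is to use surjectivity of $f^*$ to choose $a_2 \in A_2$ with $f^*(a_2) = a_1$; this is precisely where the hypothesis on $f^*$ is consumed, since it guarantees that every attribute of the domain is the pullback of some attribute of the codomain, so that the Chu condition becomes applicable at $a_1$.

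I would then chain the Chu condition across the assumed equality $f_*(x) = f_*(y)$:
\[ e_1(x, a_1) = e_1(x, f^*(a_2)) = e_2(f_*(x), a_2) = e_2(f_*(y), a_2) = e_1(y, f^*(a_2)) = e_1(y, a_1) . \]
Since $a_1 \in A_1$ was arbitrary, $x$ and $y$ agree on all attributes of $C_1$, and separatedness of $C_1$ forces $x = y$. Hence $f_*$ is injective.

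I do not expect any genuine obstacle: the argument is a direct definition chase, formally dual to the standard fact (and to the immediately preceding proposition on coalgebra homomorphisms out of strongly extensional coalgebras) that the two together supply the injectivity hypothesis of Theorem~\ref{mainth}. The only points requiring care are to invoke the correct reading of `separated' --- the condition on the \emph{points} side --- and to observe that only the domain $C_1$ need be separated; separatedness of $C_2$ plays no role in the argument.
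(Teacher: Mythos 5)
Your proof is correct, and it is precisely the standard definition chase that the paper has in mind: the paper states this proposition without proof, labelling it as standard. Your argument --- choosing a preimage under $f^*$ of an arbitrary attribute of $C_1$, transporting across the Chu condition, and invoking separatedness of the domain only --- is the intended one, including your correct observation that separatedness of $C_2$ plays no role.
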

We shall write $\sind$ for the restriction of $\IndF$ to $\sSet$, the category of sets and surjective maps.
Similarly, we write $\schu$ for the restriction of $\CHU$ to $\sSet$. Clearly $\TT$ cuts down to these restrictions. Moreover, the isomorphism of $\Chu_{K}$ with $\int \CHU$ of Proposition~\ref{Chuisoprop} cuts down to an isomorphism of $\int \schu$ with $\Chus_{K}$, the subcategory of Chu spaces and morphisms $f$ with $f^{*}$ surjective.

Thus if we define the category $\PSymm$ as in \cite{btm}, with objects Hilbert spaces of dimension $> 2$, and morphisms semiunitaries quotiented by phases, we obtain the following result:

\begin{theorem}
There is a full and faithful functor $\PP C : \PSymm \rarr \int \sind$. Moreover, the following diagram commutes:
\begin{diagram}[heads=vee,width=5em]
\PSymm & \rrep^{\PP C} & \int \sind \\
\drep<{\PP R} & & \dmon>{\int \TT} \\
\Chus_{[0, 1]} & \lTo_{\cong} & \int \schu
\end{diagram}
Here $\PP R$ is the full and faithful functor of Theorem~3.15 of \cite{btm}.
\end{theorem}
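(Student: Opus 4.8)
The plan is to define $\PP C$ directly on the projective coalgebras, verify commutativity and faithfulness essentially by unwinding definitions, and then isolate fullness as the one step carrying real content.

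First I would define $\PP C$ on objects by $\HH \mapsto (\LL(\HH), (\PP(\HH), \baH))$, an object of $\int \sind$ lying over the base $\LL(\HH)$. On morphisms, a phase-class $[U]$ of a semiunitary $U : \HH \rarr \KK$ is sent to the Grothendieck arrow $(U^{-1}, \PP U)$, whose base component is the induced surjection $f = U^{-1} : \LL(\KK) \rarr \LL(\HH)$ and whose fibre component is $\PP U : f^{*}(\PP(\HH), \baH) \rarr (\PP(\KK), \baK)$. That this fibre component is a coalgebra homomorphism is the projective version of the semiunitary-to-homomorphism fact established just above, transported to the strongly extensional quotients via Theorem~\ref{extqthm}. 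Well-definedness on phase-classes is clear, since both $\PP U$ and the action of $U^{-1}$ on subspaces are insensitive to rescaling $U$ by a phase. Functoriality is then a routine check against the Grothendieck composition formula: the base components compose as $U^{-1} \circ V^{-1} = (VU)^{-1}$, and since the reindexing functors $f^{*}$ act as the identity on coalgebra homomorphisms, the fibre components compose as $\PP V \circ \PP U = \PP(VU)$ by functoriality of $\PP$.

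Next I would establish commutativity of the square. Applying $\int \TT$ to $\PP C(\HH)$ leaves the base $\LL(\HH)$ fixed and sends the projective coalgebra $(\PP(\HH), \baH)$ to the Chu space $(\PP(\HH), \LL(\HH), \beH)$, since $\TT$ reads off exactly the probability component of the L\"uders behaviour, which is $\beH$. Under the isomorphism $\int \schu \cong \Chus_{[0,1]}$ this becomes the Chu space $(\PP(\HH), \LL(\HH), \beH)$, which is precisely the value of $\PP R$ on $\HH$. The same bookkeeping shows the two routes agree on morphisms, each producing the collapsed Chu morphism $(\PP U, U^{-1})$. Faithfulness of $\PP C$ is then immediate: the commuting composite $\cong \circ \int \TT \circ \PP C$ equals the faithful functor $\PP R$, and a functor appearing first in a composite that is faithful must itself be faithful.

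The crux, and the main obstacle, is fullness. Here I would take an arbitrary arrow $(f, g) : \PP C(\HH) \rarr \PP C(\KK)$ of $\int \sind$, so that $f : \LL(\KK) \rarr \LL(\HH)$ is surjective and $g : f^{*}(\PP(\HH), \baH) \rarr (\PP(\KK), \baK)$ is a coalgebra homomorphism. Applying $\int \TT$ (Proposition~\ref{faithindchuprop}) produces a Chu morphism $(g, f)$ between the collapsed representations, and the goal is to invoke Theorem~\ref{mainth} to recover a semiunitary. The one hypothesis of that theorem not yet in hand is injectivity of $g$, and this is exactly where the restriction to surjective reindexing maps does its work. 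Since $f$ is surjective, Proposition~\ref{surjprop} identifies bisimulation on $f^{*}(\PP(\HH), \baH)$ with bisimulation on $(\PP(\HH), \baH)$; but the latter is the strongly extensional projective quotient by Theorem~\ref{extqthm}, so bisimilarity there is equality, whence $f^{*}(\PP(\HH), \baH)$ is itself strongly extensional. The standard fact that a homomorphism out of a strongly extensional coalgebra is injective (its kernel being a bisimulation, as $F^Q$ preserves weak pullbacks) then forces $g$ to be injective. With injectivity secured, Theorem~\ref{mainth} supplies a semiunitary $U : \HH \rarr \KK$, unique up to phase, with $g = \PP U$ and $f = U^{-1}$, so that $\PP C([U]) = (f, g)$. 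I expect this injectivity argument to be the delicate point, since it is precisely the reason the construction is carried out over $\sSet$ rather than $\Set$: without surjectivity the domain coalgebra need not be strongly extensional, and the representation theorem could not be brought to bear.
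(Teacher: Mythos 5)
Your proposal is correct and follows essentially the same route as the paper: define $\PP C$ via the semiunitary-to-homomorphism proposition on the projective coalgebras, obtain commutativity and faithfulness by unwinding $\int \TT$ against $\PP R$, and get fullness from Theorem~\ref{mainth} once injectivity of the fibre component is secured from surjectivity of the base map. Your injectivity argument (strong extensionality of the reindexed coalgebra via Proposition~\ref{surjprop} and Theorem~\ref{extqthm}, then the kernel-is-a-bisimulation fact) is precisely the coalgebraic one of the two routes the paper sets up for this purpose, the other being the Chu-side fact that surjective $f^{*}$ on separated Chu spaces forces injective $f_{*}$.
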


This result confirms that our approach of expressing contravariance through indexing over a base does succeed in allowing sufficient scope for the representation of physical symmetries, while also allowing for the construction of a universal model as a final coalgebra, and for the expression of the dynamics of repeated measurements.

\section{Bivariant Coalgebra}

Our development of `coalgebra with contravariance' can be carried out quite generally. We shall briefly sketch this general development.

Suppose we have a functor
\[ G : \CC^{\mathsf{op}} \times \CC \lrarr \CC .  \]
Since $\CAT$ is cartesian closed, we can curry $G$ to obtain
\[ \hat{G} : \CC^{\mathsf{op}} \lrarr [\CC, \CC] \]
where $[\CC, \CC]$ is the (superlarge) functor category on $\CC$. There is also a functor
\[ [\CC, \CC] \lrarr \CAT \]
which sends a functor $F$ to its category of coalgebras, and a natural transformation $t : F \rarr G$ to the corresponding functor between the categories of coalgebras, as in Proposition~\ref{mtfprop}.
Composing these two functors, we obtain a strict indexed category
\[ \GG : \CC^{\mathsf{op}} \lrarr \CAT . \]
We can then form the Grothendieck category $\int \GG$.

The indexed category $\IndF$ arises in exactly this way, from the functor
\[ G : \Set^{\mathsf{op}} \times \Set \lrarr \Set \;\; :: \;\;(Q, X) \longmapsto (\One + (0, 1] \times X)^Q . \]

We have found this combination of fibrational and coalgebraic structure a convenient one for our objective in the present paper of representing physical systems. 
In particular, the fibrational approach to contravariance allows enough `elbow room' for the representation of symmetries. We also used the fibrational structure in formulating the connection to Chu spaces, which proved to be both technically useful and conceptually enlightening.
A natural follow-up would be to develop a fibred version of coalgebraic logic, which we plan to do in a sequel.

We note that a quite different, and in some sense more direct approach to coalgebra for bivariant functors has been developed by Tews \cite{DBLP:journals/entcs/Tews00}. A viable approach is developed in \cite{DBLP:journals/entcs/Tews00} only for a limited class of functors, the `extended polynomial functors'. Moreover, the issues of rigidity vs.~symmetry which we have been concerned with are not addressed in this approach, which is also technically fairly complex.
Of course, there is a beautiful theory of the solution of reflexive equations for mixed-variance functors provided by \emph{Domain theory} \cite{DBLP:books/el/leeuwen90/GunterS90,paper30}. The value of coalgebras, in our view, is that they provide a simpler setting in which a great deal can be very effectively accomplished, without the need for the introduction of partial elements and the like. 

The need for contravariance in our context, motivated by the representation of physical systems, appears to be of a different nature, and hence better met by the fibrational methods we have introduced in the present paper.

A deeper understanding of the issues here will, we hope, shed interesting light on each of the topics we have touched on in this paper: foundations of physics, computational models, and the mathematics of coalgebras.

\bibliographystyle{plain}

\bibliography{../bibfile}

\begin{thebibliography}{10}

\bibitem{btm}
S.~Abramsky.
\newblock Big toy models: Representing physical systems as {C}hu spaces.
\newblock Technical Report RR--09--08, Oxford University Computing Laboratory,
  2009.
\newblock \texttt{arXiv:0910.2393}.

\bibitem{paper30}
S.~Abramsky and A.~Jung.
\newblock Domain theory.
\newblock In S.~Abramsky, D.~Gabbay, and T.~S.~E. Maibaum, editors, {\em
  Handbook of Logic in Computer Science}, pages 1--168. Oxford University
  Press, 1994.

\bibitem{DBLP:conf/nato/AbramskyGN96}
Samson Abramsky, Simon~J. Gay, and Rajagopal Nagarajan.
\newblock Interaction categories and the foundations of typed concurrent
  programming.
\newblock In Manfred Broy, editor, {\em NATO ASI DPD}, pages 35--113, 1996.

\bibitem{Barr79}
Michael Barr.
\newblock {\em $*$-Autonomous categories}, volume 752 of {\em Lecture Notes in
  Mathematics}.
\newblock Springer, 1979.

\bibitem{Barr98}
Michael Barr.
\newblock The separated extensional {C}hu category.
\newblock {\em Theory and Applications of Categories}, 4(6):137--147, 1998.

\bibitem{Barwise97}
Jon Barwise and Jerry Seligman.
\newblock {\em Information flow: the logic of distributed systems}.
\newblock Cambridge University Press, 1997.

\bibitem{Chu79}
Po-Hsiang {C}hu.
\newblock {\em Constructing $*$-autonomous categories}, pages 103--137.
\newblock Volume 752 of {\em Lecture Notes in Mathematics\/} \cite{Barr79},
  1979.

\bibitem{DBLP:conf/calco/DrosteZ07}
Manfred Droste and Guo-Qiang Zhang.
\newblock Bifinite {C}hu spaces.
\newblock In Mossakowski et~al. \cite{DBLP:conf/calco/2007}, pages 179--193.

\bibitem{Gir87}
Jean-Yves Girard.
\newblock Linear {L}ogic.
\newblock {\em Theor. Comput. Sci. (TCS)}, 50:1--102, 1987.

\bibitem{DBLP:journals/acs/GiuliT07}
Eraldo Giuli and Walter Tholen.
\newblock A topologist's view of {C}hu spaces.
\newblock {\em Applied Categorical Structures}, 15(5-6):573--598, 2007.

\bibitem{Groth70}
A.~Grothendieck.
\newblock Cat\'egories fibr\'ees et descente (expos\'e {VI}).
\newblock In A.~Grothendieck, editor, {\em Rev\^etement Etales et Groupe
  Fondamental ({SGA}1}, volume 224 of {\em Lecture Notes in Mathematics}, pages
  145--194. Springer, 1970.

\bibitem{GS05}
H.~Peter Gumm and Tobias Schr\"oder.
\newblock Types and coalgebraic structure.
\newblock {\em Algebra Universalis}, 53:229--252, 2005.

\bibitem{DBLP:books/el/leeuwen90/GunterS90}
Carl~A. Gunter and Dana~S. Scott.
\newblock Semantic domains.
\newblock In {\em Handbook of Theoretical Computer Science, Volume B: Formal
  Models and Sematics (B)}, pages 633--674. Elsevier, 1990.

\bibitem{HKP07}
Helle~Hvid Hansen, Clemens Kupke, and Eric Pacuit.
\newblock Bisimulation for neighbourhood structures.
\newblock In {\em Proceedings of the 2nd Conference on Algebra and Coalgebra in
  Computer Science (CALCO 2007), Bergen, Norway}, volume 4624 of {\em Springer
  LNCS}, pages 279--293. Springer, 2007.

\bibitem{HS73}
H.~Herrlich and G.~Strecker.
\newblock {\em Category Theory: An Introduction}.
\newblock Allyn and Bacon, 1973.

\bibitem{DBLP:conf/cdes/Ivanov08}
Lubomir Ivanov.
\newblock Modeling non-iterated system behavior with {C}hu spaces.
\newblock In Hamid~R. Arabnia, editor, {\em CDES}, pages 145--150. CSREA Press,
  2008.

\bibitem{LS91}
Yves Lafont and Thomas Streicher.
\newblock Games {S}emantics for {L}inear {L}ogic.
\newblock In {\em LICS}, pages 43--50. IEEE Computer Society, 1991.

\bibitem{DBLP:conf/calco/2007}
Till Mossakowski, Ugo Montanari, and Magne Haveraaen, editors.
\newblock {\em Algebra and Coalgebra in Computer Science, Second International
  Conference, CALCO 2007, Bergen, Norway, August 20-24, 2007, Proceedings},
  volume 4624 of {\em Lecture Notes in Computer Science}. Springer, 2007.

\bibitem{DBLP:journals/jaciii/NguyenNWK01}
Nhu Nguyen, Hung~T. Nguyen, Berlin Wu, and Vladik Kreinovich.
\newblock {C}hu spaces: Towards new foundations for fuzzy logic and fuzzy
  control, with applications to information flow on the world wide web.
\newblock {\em JACIII}, 5(3):149--156, 2001.

\bibitem{DBLP:conf/calco/PalmigianoV07}
Alessandra Palmigiano and Yde Venema.
\newblock Nabla algebras and {C}hu spaces.
\newblock In Mossakowski et~al. \cite{DBLP:conf/calco/2007}, pages 394--408.

\bibitem{Pap00}
Basil~K. Papadopoulos and Apostolos Syropoulos.
\newblock Fuzzy sets and fuzzy relational structures as {C}hu spaces.
\newblock {\em International Journal of Uncertainty, Fuzziness and
  Knowledge-Based Systems}, 8(4):471--479, 2000.

\bibitem{DBLP:journals/mscs/Pavlovic97}
Dusko Pavlovic.
\newblock {C}hu {I}: Cofree equivalences, dualities and *-autonomous
  categories.
\newblock {\em Mathematical Structures in Computer Science}, 7(1):49--73, 1997.

\bibitem{DBLP:conf/lics/Pratt95}
Vaughan~R. Pratt.
\newblock The {S}tone gamut: A coordinatization of mathematics.
\newblock In {\em LICS}, pages 444--454. IEEE Computer Society, 1995.

\bibitem{DBLP:journals/mscs/Pratt03}
Vaughan~R. Pratt.
\newblock Transition and cancellation in concurrency and branching time.
\newblock {\em Mathematical Structures in Computer Science}, 13(4):485--529,
  2003.

\bibitem{DBLP:journals/tcs/Rutten00}
Jan J. M.~M. Rutten.
\newblock Universal coalgebra: a theory of systems.
\newblock {\em Theor. Comput. Sci.}, 249(1):3--80, 2000.

\bibitem{See89}
R.~A.~G. Seely.
\newblock Linear logic, *-autonomous categories and cofree coalgebras.
\newblock In {\em Categories in Computer Science and Logic}, volume~92 of {\em
  Contemporary Mathematics}, pages 371--382. Am. Math. Soc., 1989.

\bibitem{DBLP:journals/entcs/Tews00}
Hendrik Tews.
\newblock Coalgebras for binary methods.
\newblock {\em Electr. Notes Theor. Comput. Sci.}, 33, 2000.

\bibitem{DBLP:journals/igpl/Benthem00}
Johan van Benthem.
\newblock Information transfer across {C}hu spaces.
\newblock {\em Logic Journal of the IGPL}, 8(6), 2000.

\bibitem{RePEc:usi:wpaper:417}
Stefano Vannucci.
\newblock On game formats and {C}hu spaces.
\newblock Department of Economics University of Siena 417, Department of
  Economics, University of Siena, January 2004.

\end{thebibliography}

\end{document}